\documentclass[10pt, twocolumn]{article}
\usepackage{amsmath}
\usepackage[T1]{fontenc}
\usepackage[utf8]{inputenc}

\usepackage{fancyhdr}

\usepackage[inline]{enumitem}
\usepackage{multirow,booktabs}

\usepackage{times}

\usepackage[hang,small,labelfont=bf,up,textfont=it,up]{caption}
\usepackage{subcaption}
\captionsetup[subfigure]{justification=centering,font=small}

\usepackage{amsmath,amsfonts,amsthm,amssymb,bm,dsfont}	
\setlist{nolistsep} 

\usepackage{algorithmic}
\usepackage{siunitx}
\usepackage{bm,bbm}
\usepackage{caption,subcaption}
\usepackage{mathtools}
\mathtoolsset{showonlyrefs}
\usepackage{booktabs}
\usepackage{appendix}
\usepackage[colorlinks=true,allcolors=blue]{hyperref}

\usepackage{graphicx}
\usepackage[a4paper,top=2.75cm,bottom=2.75cm,left=1.675cm,right=1.675cm]{geometry}

\newcommand{\norm}[1]{\left\lVert#1\right\rVert}
\newcommand{\Pmat}{\mathbf{P}}
\newcommand{\Amat}{\mathbf{A}}

\newcommand{\Xmat}{\mathbf{X}}
\newcommand{\Ymat}{\mathbf{Y}}

\newcommand{\tti}{2 \to \infty}

\newcommand{\K}{K}
\newcommand{\T}{T}
\newcommand{\Op}{O_{\mathbb{P}}}

\newcommand{\Tduase}{\psi_n}
\newcommand{\Qn}{\mathbf{Q}_n}
\newcommand{\Q}{\mathbf{Q}}

\newcommand{\mat}[1]{\mathbf{#1}}

\makeatletter
\@ifundefined{algorithmicrequire}
  {}
  {}
\makeatother

\makeatletter
\@ifundefined{algorithmicensure}
  {}
  {}
\makeatother

\renewcommand{\vec}[1]{\boldsymbol{#1}}

\usepackage{numprint}
\npthousandsep{,}\npthousandthpartsep{}\npdecimalsign{.}
\newcommand\citet[1]{\citeauthor{#1}, \citeyear{#1}}

\newcommand{\titledoc}{Statistical hypothesis testing for differences between layers in dynamic multiplex networks}
\newcommand{\titleshort}{Statistical hypothesis testing for differences between layers in dynamic multiplex networks}

\usepackage{titlesec}
\titleformat{\section}[block]{\normalfont\Large\bfseries\raggedright}{\thesection}{1em}{}
\titleformat{\subsection}[block]{\normalfont\large\bfseries\raggedright}{\thesubsection}{1em}{}

\pagestyle{fancy}
\rhead{\bfseries{\titleshort}}
\chead{}
\lhead{}
\lfoot{\bfseries{M. Baum, F. Sanna Passino, A. Gandy}}
\cfoot{}
\rfoot{\textbf{\thepage}}
\setlength{\fboxrule}{4pt}\setlength{\fboxsep}{2ex}

\providecommand{\keywords}[1]{{\small{\textbf{\textit{Keywords ---}} #1}}}

\usepackage{fontawesome}
\usepackage{natbib}
\usepackage{setspace}

\usepackage{authblk}

\usepackage{mathtools,bigints}

\usepackage[ruled,linesnumbered,noend]{algorithm2e}
\SetKwInput{KwInput}{Input}
\setcounter{algocf}{0}
\let\oldnl\nl
\newcommand{\nonl}{\renewcommand{\nl}{\let\nl\oldnl}}
\makeatother

\usepackage{authblk}
\usepackage{natbib}

\usepackage{graphicx}
\graphicspath{{Fig/}}


\usepackage{amsmath,amssymb,amsthm,amsfonts}
\allowdisplaybreaks
\usepackage{url} 

\newtheorem{theorem}{Theorem}[section]
%
%
%
\newtheorem{definition}[theorem]{Definition}
\newtheorem{corollary}{Corollary}

\newtheorem{assumption}{Assumption}

\DeclareMathOperator*{\argmin}{arg\,min}

\author{Maximilian Baum}
\author{Francesco Sanna Passino}
\author{Axel Gandy}

\affil{Department of Mathematics, Imperial College London \\ 180 Queen’s Gate, SW7 2AZ, London (United Kingdom)}

\date{}

\title{\Huge\textbf{\titledoc}}

\allowdisplaybreaks

\usepackage{multibib}
\newcites{SM}{Supplementary references}

\begin{document}

\twocolumn[
\maketitle

\begin{center}
  {\small\bfseries Abstract}
\end{center}

\begin{center}
\begin{minipage}{0.85\textwidth} 
\small
With the emergence of dynamic multiplex networks, corresponding to graphs where multiple types of edges evolve over time, a key inferential task is to determine whether the layers associated with different edge types differ in their connectivity. In this work, we introduce a hypothesis testing framework, under a latent space network model, for assessing whether the layers share a common latent representation. The method we propose extends previous literature related to the problem of pairwise testing for random graphs and enables global testing of differences between layers in multiplex graphs. While we introduce the method as a test for differences between layers, it can easily be adapted to test for differences between time points. We construct a test statistic based on a spectral embedding of an unfolded representation of the graph adjacency matrices and demonstrate its ability to detect differences across layers in the asymptotic regime where the number of nodes in each graph tends to infinity. The finite-sample properties of the test are empirically demonstrated by assessing its performance on both simulated data and a biological dataset describing the neural activity of larval \textit{Drosophila}. 

\vspace*{0.5em}
\keywords{random graphs, hypothesis testing, spectral embedding, stochastic blockmodel.}
\end{minipage}
\end{center}

\vspace*{2.5em}
]



\section{Introduction}
With the rise of increasingly complex graph-based data, the ability to answer fundamental statistical questions about these objects becomes increasingly relevant. 
In particular, when studying multiple graphs, a question that is of high practical importance is whether any of these observed graphs are structurally different from the others or whether they are independent realizations of the same random process \citep[see, for example,][]{Ginestet17, tang2017semiparametric, tang2017nonparametric, ghoshdastidar2020two, Chatterjee23, chen2024spectral, jin2024optimal}. 
This task is particularly important for \emph{dynamic multiplex networks}, graphs which are observed at multiple points in time and which present multiple connection types through different layers \citep[see, for example,][]{Kivela14}. 
For instance, in computer network monitoring, such analysis can reveal whether traffic patterns remain stable over time or exhibit structural shifts due to attacks or configuration changes \citep{Adams16}. Similarly, in neuroscience, testing for differences across layers of a multiplex brain network can shed light on how specific types of neural connections influence learning or behavior \citep[
][]{eschbach2020recurrent, jiang2021models}.
In this work, we formulate this problem within a \textit{statistical hypothesis testing} framework. 

Given a graph $\mathcal{G}$, in which connections between nodes occur across multiple layers and evolve over time, our goal is to determine whether the layers of this graph correspond to realizations from a shared latent structure or whether they differ systematically. More precisely, we consider the null hypothesis that, while the edges of $\mathcal{G}$ may vary over time, they arise from a common generative mechanism for all layers, against the alternative that at least one layer deviates in its underlying connectivity.
In order to address this question, we adopt tools from the spectral embedding literature, proposing a methodology which belongs to the class of spectral-based testing methods for networks \citep[][]{tang2017semiparametric, tang2017nonparametric, Dong20, chen2024spectral}. In particular, we build upon the \textit{Doubly Unfolded Adjacency Spectral Embedding} \citep[DUASE;][]{baum2024doubly} method to develop a statistical test based on its latent position estimates. 
This approach builds on work developed for pairwise tests between individual graphs based on standard adjacency spectral embedding \citep{tang2017semiparametric}, extending it to a general framework for \textit{joint} testing in multilayer and dynamic settings. 
Furthermore, we provide theoretical results describing the performance of the test as the number of nodes increases. 

To the best of our knowledge, the proposed testing mechanism is the first test of its kind, and no other procedures have been formally introduced in the literature to test specifically whether layers in dynamic multiplex graphs differ in their underlying connectivity patterns.

We begin this work with an overview of existing literature on statistical testing for graphs in Section \ref{sec:background}, before formally introducing the joint hypothesis testing problem for random graphs in Section~\ref{sec:Problem}. The testing methodology is detailed in Section~\ref{sec:Methods}, and Section~\ref{sec:Bootstrap} describes a bootstrap procedure that can be used to estimate the critical value for an $\alpha$-level test. The performance of these methods is assessed on both simulated data (\textit{cf.} Section~\ref{sec:Sims}) and  biological learning networks of larval \textit{Drosophila} (\textit{cf.} Section~\ref{sec:clayton}).

\subsection{Background and related literature}
\label{sec:background}

\subsubsection{Statistical testing for graphs}

The topic of hypothesis testing for graphs has been extensively explored in literature. However, much of the existing work is centered around the topic of testing hypotheses relating to a single graph. Within this vein, \cite{green2022bootstrapping} explore the problem of bootstrapping to estimate the distribution of different subgraph structures when the nodes of a graph are exchangeable, while \cite{du2023hypothesis} and \cite{fan2022simple} work within the latent position model \citep[LPM;][]{hoff2002latent} framework and investigate the problem of testing whether the latent positions or community memberships of two nodes within the same graph are equal. \cite{Dong20} use spectral theory to construct a test aimed at identifying communities within a single graph. In this work, we are not interested in testing hypotheses for single nodes or quantities within a single graph, but rather we explore hypothesis testing for differences \textit{between} graphs. In this way, our framework is more similar to the two-sample hypothesis testing problem for random graphs in which one seeks to determine if two sets of graphs come from the same underlying distribution. Variations of this problem are an active area of research,  
recently explored in \cite{Ginestet17}, \cite{ghoshdastidar2020two}, \cite{chung2022valid}, \cite{chen2023hypothesis} and \cite{jin2024optimal}. 

Our work falls into the class of spectral embedding-based tests which have previously been developed for the two-graph case \citep[see, for example][]{tang2017semiparametric, tang2017nonparametric}, and can also be extended to the setting where the number of nodes in the graphs diverges \citep{alyakin2024correcting}. Within this class we draw a distinction between \cite{tang2017nonparametric} and \cite{alyakin2024correcting} which are based on the maximum mean discrepancy approach of \cite{gretton2012kernel}, and \cite{tang2017semiparametric} which is most similar to our approach. In addition to sharing the underlying network model of the random dot product graph \citep[RDPG, \textit{cf.}][]{athreya2018} with our work, the structure of the testing procedure is also similar. In both our work and in \cite{tang2017semiparametric}, a test statistic is derived from spectral embedding estimates, and a bootstrap algorithm is used in order to generate a reference distribution and calculate a $p$-value. The existing literature on the two-sample hypothesis testing problem for graphs naturally results in tests that are suitable for a \textit{pairwise} testing framework for dynamic multiplex networks. In this work, we construct a joint test for differences between layers, 
which is designed to test a \textit{global} null hypothesis without the need to aggregate across multiple tests. To our knowledge, the problem of testing for global differences between layers of dynamic multiplex graphs using a single test statistic has not yet been explored.

\subsubsection{Latent position models}
\label{sec:background-lpm}

One of the foundational classes of network models is the family of so-called latent position models \citep[LPM;][]{hoff2002latent}. Under this framework, each node in a graph is equipped with a latent position $\vec x_i \in \mathbb{R}^d$ for $d \in \mathbb{N}$. 
The network adjacency matrix $\Amat$ is then treated as a random matrix where the connection between nodes $i$ and $j$ is a Bernoulli random variable with parameter obtained via a kernel function $\kappa:\mathbb{R}^d\times\mathbb{R}^d\to[0,1]$ of the latent positions of the nodes. More formally, each of the network edges are independent and satisfy $\mathbb{P}(\Amat_{i,j} = 1) = \kappa(\vec x_i, \vec x_j)$. When we take the kernel function to be the inner product, the model corresponds to the random dot product graph, whose latent positions can be consistently estimated up to an orthogonal transformation via a spectral decomposition of the adjacency matrix \citep[see, for example,][]{athreya2018}. This framework can be adapted to accommodate graphs that are both dynamic and multiplex. More generally, a number of procedures have been proposed to extend LPMs to multiplex graphs, such as  \cite{jing2021community,macdonald2022latent, lei2023bias} with extensions to the dynamic multiplex setting discussed in \cite{Oselio14, Durante17, Loyal23, baum2024doubly, Wang26}. In this work, we adopt the dynamic multiplex random dot product graph (DMPRDG) framework of \cite{baum2024doubly} due to its ability to construct a time-agnostic latent representation for each layer. In particular, in Definition~\ref{def:fixed-dmprdpg} we define a variant of this model for the case of fixed latent positions, and utilize 
it to construct a test for differences between layers, leveraging an extension of the theoretical results in \cite{baum2024doubly}.

\section{Problem setting} \label{sec:Problem}

In this section we provide a more precise definition of our problem setting and research question. We focus on a dynamic multiplex network with $K \in \mathbb{N}$ layers observed at $T \in \mathbb{N}$ time points, denoted 
$\mathcal{G} = (\mathcal{V}, \{\mathcal{E}_{k,t}\}_{k \in [K], t \in [T]})$, where $\mathcal{V}=[n]$ is a set of $n \in \mathbb{N}$ nodes shared across all layers and time points, with $[n]=\{1,\dots,n\}$, and $\mathcal{E}_{k,t} \subseteq \mathcal{V} \times \mathcal{V}$ is the set of edges of type $k$ at time $t$, such that $(i,j) \in \mathcal{E}_{k,t}$ if and only if nodes $i,j \in \mathcal{V}$ are connected by an edge of type $k$ at time $t$. We do not assume that edge sets are disjoint: for any $(k,t)$ and $(k^\prime,t^\prime)$, $\mathcal{E}_{k,t} \cap \mathcal{E}_{k^\prime,t^\prime}$ may be nonempty, allowing the same pair of nodes to be connected across multiple layers and times. 

As a framework for analyzing dynamic and multilayer graphs, we make use of 
statistical latent space models which impose a low-rank structure on the connection probabilities of the graph. The mathematical foundations that we adopt for this work come from the literature on the random dot product graph. Specifically, we adopt the dynamic multiplex random dot product graph (DMPRDPG) and the doubly unfolded adjacency spectral embedding (DUASE) estimator from \cite{baum2024doubly} for its parameters. 

\begin{definition}[Dynamic multiplex random dot product graph, DMPRDPG; \citet{baum2024doubly}]
\label{def:fixed-dmprdpg}
Consider a dynamic multiplex graph with $n$ nodes, $T$ time points and $K$ layers. For $d\in\mathbb{N}$, let $\mathcal{X}, \mathcal{Y}\subseteq\mathbb{R}^d$ such that $\vec{x}^\intercal\vec{y}\in[0,1]$ for any $\vec x\in\mathcal{X}$ and $ \vec{y}\in\mathcal{Y}$,  
and let $\Xmat^{1}, \dots, \Xmat^{K} \in \mathbb{R}^{n \times d}$ and $\Ymat^{1}, \dots, \Ymat^{T} \in \mathbb{R}^{n \times d}$ be a collection of fixed matrices such that $\Xmat^{k}_{i} \in \mathcal{X}$ and $\Ymat^{t}_{i} \in \mathcal{Y}$ for all $k \in[K]$, $t \in [T]$, $i \in [n]$, where $\Xmat^{k}_{i}$ and $\Ymat^{t}_{i}$ are the $i$-th row of $\Xmat^{k}$ and $\Ymat^{t}$ respectively. Define the $n\times n$ connection probability matrices for each time point and layer as $\Pmat^{k,t} = \Xmat^{k}\Ymat^{t\intercal}$ and construct the doubly unfolded probability matrix as
\begin{equation}
    \Pmat = \begin{bmatrix}
        \Pmat^{1,1} & \dots & \Pmat^{1,\T}\\
        \vdots & \ddots & \vdots \\
        \Pmat^{\K,1} & \dots & \Pmat^{\K,\T}
        \end{bmatrix} = \Xmat\Ymat^\intercal \in \mathbb{R}^{n\K \times n\T},
\end{equation}
where the tall matrices $\Xmat = [\Xmat^{1} \mid\dots\mid \Xmat^{\K}]\in \mathbb{R}^{n\K\times d}$ and $\Ymat = [\Ymat^{1} \mid\dots\mid\Ymat^{\T}]\in \mathbb{R}^{n\T \times d}$ are obtained by vertical stacking. 
Given a sequence of adjacency matrices $\Amat^{k,t} \in \{0,1\}^{n \times n}$, for $k\in[\K],\ t\in[\T]$, we define the doubly unfolded adjacency matrix $\Amat \in \{0,1\}^{n\K \times n\T}$ as
\begin{equation}
\Amat = \begin{bmatrix}
\Amat^{1,1} & \dots & \Amat^{1,\T}\\
\vdots & \ddots & \vdots \\
\Amat^{\K,1} & \dots & \Amat^{\K,\T}
\end{bmatrix}.
\label{eq:double_unfolding}
\end{equation}
We can then say that $\Amat \sim 
\mathrm{DMPRDPG}(\Xmat,\Ymat)$ if the matrix $\Amat^{k,t}$ has independent entries with distribution 
\begin{equation}
    \Amat^{k,t}_{i,j} \sim \mathrm{Bernoulli}(\Pmat^{k,t}_{i,j}),
\end{equation}
for all $i,j\in\{1,\dots,n\}$, $i \neq j,\ k \in [\K],\ t \in [\T]$. 
Equivalently, we write $\Amat\sim\mathrm{Bernoulli}(\Pmat)$ or $\Amat\sim\mathrm{Bernoulli}(\Xmat\Ymat^\intercal)$.
\end{definition}

\begin{definition}[Doubly unfolded adjacency spectral embedding, DUASE; \citet{baum2024doubly}] \label{definition:duase}
Consider a set of adjacency matrices $\{\Amat^{k,t}\}_{k\in [K],t\in [T]}$ arising from a dynamic multiplex graph, where $\Amat^{k,t}\in \{0,1\}^{n \times n}$ for all $k\in[K]$ and $t\in[T]$. Construct the doubly unfolded adjacency matrix $\Amat$ as 
described in Equation~\eqref{eq:double_unfolding}, and, for $d\in[n]$, consider the singular value decomposition 
\begin{equation}
    \Amat =  \mathbf{U D V}^\intercal + \mathbf{U}_\perp \mathbf{D}_\perp  \mathbf{V}_\perp^\intercal,
\end{equation} 
where $\mathbf{D}\in\mathbb{R}^{d\times d}$ is a diagonal matrix containing the $d$ largest singular values of $\Amat$, $ \mathbf{U}\in\mathbb{R}^{nK\times d}$ and $\mathbf{V} \in\mathbb{R}^{nT\times d}$ contain the corresponding left and right singular vectors respectively, and $ \mathbf{D}_\perp$, $ \mathbf{U}_\perp$ and $\mathbf{V}_\perp$ contain the remaining singular values, left singular vectors, and right singular vectors respectively.
Then, the doubly unfolded adjacency spectral embedding of $\{\Amat^{k,t}\}_{k\in [K],\ t\in [T]}$ into $\mathbb R^d$ is defined as:
\begin{align}
\hat{\Xmat} = \mathbf{U D}^{1/2}\in\mathbb R^{nK\times d}, & & 
\hat{\Ymat} = \mathbf{VD}^{1/2}\in\mathbb R^{nT\times d}.
\end{align}
\end{definition}

Through the DUASE estimation procedure in Definition~2, we obtain two sets of latent position estimates: $\hat{\Xmat}$, corresponding to layer-specific structure, and $\hat{\Ymat}$, corresponding to the second indexing dimension, here corresponding to time. The right embedding $\hat{\Ymat}$ captures variation across this second index, but we do not impose any additional smoothness, ordering, or temporal dependence assumptions on it. The DUASE left embedding $\hat{\Xmat}$ provides a representation of layer-specific latent structure that is invariant across this second dimension. This representation is therefore suitable for testing whether layers differ in their underlying connectivity patterns, in the sense that it aggregates information over the second index without assuming temporal invariance. In this work, we use the DUASE left embedding $\hat \Xmat = [\hat \Xmat^1 \mid \cdots \mid \hat \Xmat^K]$ to construct a test for differences between layers.

The theoretical framework that we consider consists of a sequence of DMPRDPGs with an increasing number of nodes. Concretely, we consider a sequence of latent position matrices $\Xmat_n \in \mathbb{R}^{n K \times d}$ and $\Ymat_n \in \mathbb{R}^{nT \times d}$ which are fixed but unknown. We observe a sequence of dynamic multiplex graphs resulting in adjacency matrices modeled as $\Amat_n \sim \mathrm{DMPRDPG}(\Xmat_n, \Ymat_n)$, and formulate the testing problem in terms of the layer-specific left latent positions $\Xmat_n$. The graphs in this sequence are made up of an increasing number of nodes $n$, and we are interested in studying the performance of our test as the number of nodes in each graph grows. In this framework, the sequence of null hypotheses $H^n_0$ of no differences between layers can be naturally defined by:
\begin{equation}
    \label{eq:DUASE_null}
    H_0^n: \Xmat_n^1 = \Xmat_n^2 = \cdots = \Xmat_n^K.
\end{equation}
Each null hypothesis $H^n_0$ is tested against an alternative hypothesis $
H^n_1$ stating that $\Xmat^k_n \neq \Xmat_n^\ell$ for at least one pair $(k,\ell)$, with $k\neq\ell,\ k,\ell\in[K]$. We note that hypotheses $H^n_0$ and $H^{n^\prime}_0$ need not be nested or related for $n \neq n^\prime$. 
In Section~\ref{sec:Methods}, we propose a test statistic for $H_0^n$ versus $H_1^n$, and we study its asymptotic behavior when the number of nodes $n$ is allowed to grow. For our results to hold, we simply require two mild regularity conditions on the sequence of latent positions, which will be further detailed in Section~\ref{sec:Methods}.

In contrast to many standard testing problems, the parameter space of the matrices $\Xmat_n$ on which we formulate our null hypothesis increases in size as $n$ increases. Our testing problem is therefore not parametric in the traditional sense. However, we treat the latent position matrices as fixed and the model places distributional requirements on the doubly unfolded adjacency matrices $\Amat_n\in\{0,1\}^{nK\times nT}$ and in this way the procedure cannot be considered strictly non-parametric. Our test is therefore best classified as semiparametric in the sense of \cite{tang2017semiparametric}. 

We note that the problem of testing for differences over time for dynamic multiplex graphs is closely related to our research question, and can be solved using the same fundamental methodology, by replacing the left embedding $\Xmat_n$ with the right DUASE embedding $\Ymat_n$. For clarity, in this work we focus exclusively on the test for differences between layers, but we note that the same procedure could equally be applied to testing for differences over time. 


Following the same key arguments in this work, we expect the theoretical setting to also be adaptable to the case where $K$ and $T$ grow with $n$, provided that the rate of growth is of order $\log(n)$ or slower. Additionally, the case of sparse graphs could be addressed by including a sparsity parameter $\rho$ to control the asymptotic connection density of the network as in \cite{baum2024doubly}.

\section{Methods and results}
\label{sec:Methods}

The testing framework that we introduce for the null hypothesis defined in \eqref{eq:DUASE_null} calculates a test statistic based on spectral embedding of the doubly unfolded matrix $\Amat_n$. Let $(\hat \Xmat_n, \hat \Ymat_n) = \mathrm{DUASE}(\Amat_n)$ and define the test statistic 
\begin{equation}
    \label{eq:DUASE_Teststat}
   \Tduase = \frac{1}{K\sqrt{\log n}}\sum_{k=1}^K \left\|\hat{\Xmat}_n^k - \bar{\hat \Xmat}_n\right\|_F, 
\end{equation}
where $\bar{\hat \Xmat}_n = \frac{1}{K} \sum_{k=1}^K \hat \Xmat^k_n$ and $\|\cdot\|_F$ denotes the Frobenius norm of a matrix.  The test statistic $\Tduase$ calculates a measure of the difference between each of the layer-specific embedding estimates and their average. 

Notably, this test statistic does not include a Procrustes transformation between the different embeddings $\hat \Xmat_n^k$ and $\bar{\hat{\Xmat}}_n$ as is common in other tests based on spectral embedding \citep[see, for example,][]{tang2017semiparametric}. By performing joint embedding via DUASE rather than individual embedding procedures for each layer, the latent position estimates for different layers are directly comparable to one another. This detail reduces the computational cost of calculating the test statistic and eliminates one source of statistical error. 

As a key contribution of this work, we show that the test statistic $\Tduase$ defined in~\eqref{eq:DUASE_Teststat} can be used to construct a principled and powerful test for the null hypothesis \eqref{eq:DUASE_null} when paired with a rejection region $(C, \infty )$ for an appropriately chosen $C>0$. It must be remarked that the test statistic $\psi_n$ is suitable for a \emph{global} test, and it does not by itself identify which layers differ or whether the alternative is driven by one or multiple outlying layers. When the global null is rejected and a more fine-grained interpretation is desired, pairwise layer comparisons can be used to localize the source of the difference.

\subsection{Key results}

We now introduce two results which show that the proposed statistic $\Tduase$ is able to differentiate between the null and alternative hypotheses as the number of nodes in the graph grows. First, we show in Theorem~\ref{prop:TestLevel} that when the latent positions are equal across layers, the probability of rejection via the proposed test statistic converges to $0$ as $n$ tends to infinity. Furthermore, when the divergence between layer-specific latent positions is sufficiently large, Theorem~\ref{prop:TestPower} demonstrates that the rejection probability associated with the proposed testing procedure converges to $1$.

We derive the results in this section under the setting of known embedding dimension $d$ and additionally introduce two regularity conditions which we impose on the sequence of latent position matrices $(\Xmat_n, \Ymat_n)$.

\begin{assumption}[Singular values of $\Pmat_n$]
\label{assump:Pmat-singularvals}
Consider a sequence of 
matrices $(\Xmat_n, \Ymat_n)$, $\Xmat_n\in\mathbb{R}^{nK\times d},\ \Ymat_n\in\mathbb{R}^{nT\times d}$, and define the matrix $\Pmat_n = \Xmat_n \Ymat_n^\intercal$. We require that for all $n$ the singular values of $\Pmat_n$ are 
unique and 
$\sigma_\ell(\Pmat_n)=\Theta(n)$ 
for all $\ell\in[d]$, 
where $\sigma_\ell(\mathbf{M})$ is the $\ell$-th largest singular value of the matrix $\mathbf{M}$.
\end{assumption}

\begin{assumption}[Convergence of 
Gram matrices]
\label{assum:Q_sing_values}
Consider a sequence of 
matrices $(\Xmat_n, \Ymat_n)$, $\Xmat_n\in\mathbb{R}^{nK\times d},\ \Ymat_n\in\mathbb{R}^{nT\times d}$. We require the existence of positive definite matrices $\Delta_X,\Delta_Y \in\mathbb{R}^{d \times d}$ 
such that $n^{-1}\Xmat_n^\intercal \Xmat_n \to \Delta_X$ and $n^{-1}\Ymat_n^\intercal \Ymat_n \to \Delta_Y$.  
\end{assumption}


The prescribed growth rate of the singular values of $\Pmat_n$ in Assumption~\ref{assump:Pmat-singularvals} is a mild condition which follows naturally from the interpretation of the rows of $\Xmat_n$ and $\Ymat_n$ as latent positions with magnitudes not dependent on the total number of nodes in the graph. Assumption \ref{assum:Q_sing_values}
corresponds to a requirement that the latent positions for individual nodes exhibit a form of regularity as $n$ increases and ensures that the alignment between the given latent position sequence and the theoretical spectral embedding is stable. These assumptions are automatically satisfied in the setting where the latent positions for each node are treated as 
samples from a shared distribution \citep[\textit{cf. Propositions~3~and~12,}][]{baum2024doubly}.  


Under Assumptions~\ref{assump:Pmat-singularvals}~and~\ref{assum:Q_sing_values}, we obtain the following key theoretical results on the asymptotic behavior of the proposed test statistic $\Tduase$ when combined with a test that rejects $H_0^n$ if its observed value is larger than a constant $C$. 

\begin{theorem}
\label{prop:TestLevel}
Let $(\Xmat_n, \Ymat_n)$, $\Xmat_n\in\mathbb{R}^{nK\times d},\ \Ymat_n\in\mathbb{R}^{nT\times d}$ be a sequence of latent positions satisfying the conditions in Definition~\ref{def:fixed-dmprdpg} and Assumptions~\ref{assump:Pmat-singularvals}~and~\ref{assum:Q_sing_values}, such that $\Xmat_n^1 = \Xmat_n^2 = \cdots = \Xmat_n^K$ for all $n\in\mathbb{N}$. Let $\Amat_n \sim \mathrm{DMPRDPG} (\Xmat_n, \Ymat_n)$.  Then, there exists a constant $C > 0$ such that  
\begin{equation}
    \mathbb{P}\left( \Tduase > C \right) \overset{n\to \infty}{\longrightarrow} 0.
\end{equation}
\end{theorem}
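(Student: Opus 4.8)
The plan is to control the test statistic $\Tduase$ by a uniform perturbation bound on the DUASE left embedding under the null. Under $H_0^n$, all layers share a common left latent position matrix, say $\Xmat_n^k = \Xmat_n^\circ \in \mathbb{R}^{n\times d}$ for all $k$, so the true stacked matrix $\Xmat_n$ has identical blocks, and consequently $\Pmat_n = \Xmat_n \Ymat_n^\intercal$ has the property that all its row-blocks $\Pmat_n^{k,t} = \Xmat_n^\circ \Ymat_n^{t\intercal}$ coincide across $k$. The key point is that the true left singular subspace of $\Pmat_n$ therefore also has identical blocks: writing $\Pmat_n = \UmatP \DmatP \VmatP^\intercal$, the block structure forces $(\UmatP)^k$ to be the same for every $k$ (up to the shared rotation fixed by the SVD), and hence the population version of $\hat\Xmat_n^k$ is exactly block-constant, giving $\Xmat_n^k - \bar{\Xmat}_n = 0$ at the population level. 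So the entire statistic is driven by estimation error $\hat\Xmat_n^k - \Xmat_n \Wmat_n$ for a suitable orthogonal $\Wmat_n$.

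The main technical ingredient I would invoke is the two-to-infinity norm perturbation bound for DUASE from \cite{baum2024doubly} (an extension of the standard ASE results in \cite{athreya2018}): under Assumptions~\ref{assump:Pmat-singularvals} and~\ref{assum:Q_sing_values}, there is a sequence of orthogonal matrices $\Wmat_n \in \mathbb{R}^{d\times d}$ with
\begin{equation}
\bignorm{\hat\Xmat_n - \Xmat_n \Wmat_n}_{\tti} = \Op\!\left(\frac{\sqrt{\log n}}{\sqrt n}\right).
\end{equation}
Since the Frobenius norm of an $n\times d$ matrix is at most $\sqrt{n}$ times its $2\to\infty$ norm (with $d$ fixed), this yields $\bignorm{\hat\Xmat_n^k - \Xmat_n^k \Wmat_n}_F = \Op(\sqrt{\log n})$ for each $k$, and the same for $\bar{\hat\Xmat}_n - \bar{\Xmat}_n\Wmat_n$ by the triangle inequality. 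Under $H_0^n$ we have $\Xmat_n^k = \bar{\Xmat}_n$, so
\begin{equation}
\bignorm{\hat\Xmat_n^k - \bar{\hat\Xmat}_n}_F \le \bignorm{\hat\Xmat_n^k - \Xmat_n^k\Wmat_n}_F + \bignorm{\bar{\Xmat}_n\Wmat_n - \bar{\hat\Xmat}_n}_F = \Op(\sqrt{\log n}).
\end{equation}
Summing over the finitely many layers $k\in[K]$ and dividing by $K\sqrt{\log n}$ gives $\Tduase = \Op(1)$. To upgrade "$\Op(1)$" to "tends to zero in probability after thresholding at a fixed constant $C$", I would either (i) track the constant in the $\Op$ bound explicitly — the DUASE concentration bound holds with probability $1 - O(n^{-c})$ for the event that $\bignorm{\hat\Xmat_n - \Xmat_n\Wmat_n}_{\tti} \le c'\sqrt{\log n}/\sqrt n$ — so that on that high-probability event $\Tduase \le 2c' \sqrt{d}$, and then take $C = 2c'\sqrt d$, whence $\mathbb{P}(\Tduase > C) \le O(n^{-c}) \to 0$; or (ii) note more simply that any $\Op(1)$ quantity divided by an additional $\sqrt{\log n}\to\infty$ would vanish — but here the $\sqrt{\log n}$ is already baked into the statistic, so route (i) is the honest one and is what I would write.

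The main obstacle is justifying that the population left embedding is exactly block-constant under the null, i.e.\ that the shared-block structure of $\Pmat_n$ is inherited, without distortion, by the rank-$d$ left singular vectors used to define $\hat\Xmat_n$ — this requires the uniqueness of singular values from Assumption~\ref{assump:Pmat-singularvals} so that the SVD is well-defined up to sign, and a short argument that $\Pmat_n^{k,t}$ being independent of $k$ implies each block $(\UmatP \DmatP^{1/2})^k$ is independent of $k$. A secondary subtlety is that $\Wmat_n$ is a \emph{single} orthogonal matrix aligning the full stacked $\hat\Xmat_n$ to $\Xmat_n$, so the same $\Wmat_n$ appears in every block; this is exactly why no Procrustes alignment between layers is needed, and it must be stated carefully when invoking the DUASE bound. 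Both points are consequences of results already available in \cite{baum2024doubly}, so the proof is essentially an assembly of that perturbation theory with the null block structure and a union bound.
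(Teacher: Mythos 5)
Your proposal is correct and follows essentially the same route as the paper: a triangle-inequality decomposition of $\|\hat \Xmat_n^k - \bar{\hat \Xmat}_n\|_F$ around the population differences $\|\Xmat_n^k - \bar\Xmat_n\|_F$ (which vanish under $H_0^n$), conversion of the two-to-infinity DUASE bound to a Frobenius bound via a factor of $\sqrt{n}$, and the overwhelming-probability form of the concentration to obtain $\mathbb{P}(\Tduase > C) \to 0$ for a fixed $C$. The one substantive discrepancy is that the DUASE alignment is a general invertible $\Qn \in \mathrm{GL}(d)$, not an orthogonal $\Wmat_n$, so the paper additionally rescales by $\sigma_{min}(\Qn)$ and invokes that this is of constant order; your argument needs the same one-line fix before the Frobenius norms with and without the alignment can be compared. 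Your discussion of the block-constancy of the population left singular vectors is unnecessary, since the null hypothesis is stated directly as $\Xmat_n^1 = \cdots = \Xmat_n^K$ and the estimation guarantee already aligns all layers with a single common transformation.
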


\begin{theorem}
\label{prop:TestPower}
Let $(\Xmat_n, \Ymat_n)$, $\Xmat_n\in\mathbb{R}^{nK\times d},\ \Ymat_n\in\mathbb{R}^{nT\times d}$ be a sequence of latent positions satisfying the conditions in Definition~\ref{def:fixed-dmprdpg} and Assumptions~\ref{assump:Pmat-singularvals}~and~\ref{assum:Q_sing_values}, such that $K^{-1} \sum_{k=1}^K \norm{ \Xmat_n^k - \bar \Xmat_n}_F = \omega (\sqrt{\log n})$, where $\bar{\Xmat}_n=K^{-1}\sum_{k=1}^K \Xmat_n^k$. Let $\Amat_n \sim \mathrm{DMPRDPG} (\Xmat_n, \Ymat_n)$. Then, for any constant $C > 0$, we have 
\begin{equation}
    \mathbb{P}(\Tduase > C) \overset{n\to \infty}{\longrightarrow} 1.
\end{equation}
\end{theorem}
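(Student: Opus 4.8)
The plan is to lower-bound $\Tduase$ by the population analogue of the statistic, computed from the true latent positions, and then show that an estimation-error term is negligible relative to the assumed divergence of that population quantity, so that $\Tduase$ is eventually forced past any fixed threshold $C$.

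First I would invoke the extension of the DUASE consistency results of \citet{baum2024doubly} to the fixed-latent-position regime under Assumptions~\ref{assump:Pmat-singularvals} and~\ref{assum:Q_sing_values}. This provides a sequence of invertible matrices $\Qn\in\mathbb{R}^{d\times d}$, the \emph{same} one aligning every block, with $\sigma_{\min}(\Qn)$ bounded away from zero uniformly in $n$ (this is exactly where Assumption~\ref{assum:Q_sing_values}, controlling the Gram matrices, enters), such that $\bignorm{\hat\Xmat_n - \Xmat_n\Qn}_F = \Op(1)$; any rate $o_{\mathbb{P}}(\sqrt{\log n})$ would in fact suffice. In particular $\max_{k\in[K]}\bignorm{\hat\Xmat_n^k - \Xmat_n^k\Qn}_F = \Op(1)$, and since $\bar{\hat\Xmat}_n - \bar\Xmat_n\Qn = K^{-1}\sum_{k=1}^K(\hat\Xmat_n^k - \Xmat_n^k\Qn)$ we also get $\bignorm{\bar{\hat\Xmat}_n - \bar\Xmat_n\Qn}_F = \Op(1)$.

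Next I would apply the reverse triangle inequality block-wise. For each $k\in[K]$, writing $a=\hat\Xmat_n^k$, $a'=\Xmat_n^k\Qn$, $b=\bar{\hat\Xmat}_n$, $b'=\bar\Xmat_n\Qn$,
\[
\bignorm{\hat\Xmat_n^k - \bar{\hat\Xmat}_n}_F \ge \bignorm{(\Xmat_n^k-\bar\Xmat_n)\Qn}_F - \bignorm{a-a'}_F - \bignorm{b-b'}_F \ge \sigma_{\min}(\Qn)\bignorm{\Xmat_n^k-\bar\Xmat_n}_F - \Op(1).
\]
Summing over $k$ and dividing by $K\sqrt{\log n}$ gives $\Tduase \ge \sigma_{\min}(\Qn)\,(K\sqrt{\log n})^{-1}\sum_{k=1}^K\bignorm{\Xmat_n^k-\bar\Xmat_n}_F - \Op\!\left((\log n)^{-1/2}\right)$. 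By hypothesis the sum equals $K\cdot\omega(\sqrt{\log n})$ and $\sigma_{\min}(\Qn)\ge c>0$ for all large $n$, so the first term is $\omega(1)$ while the remainder is $o_{\mathbb{P}}(1)$; hence $\Tduase$ diverges to infinity in probability. To conclude rigorously, fix $C>0$ and $\varepsilon>0$, pick $M$ with $\mathbb{P}(\bignorm{\hat\Xmat_n-\Xmat_n\Qn}_F\le M)\ge 1-\varepsilon$ for all large $n$; on that event the displayed bound gives $\Tduase \ge c\,(K\sqrt{\log n})^{-1}\sum_k\bignorm{\Xmat_n^k-\bar\Xmat_n}_F - 2M/\sqrt{\log n}$, whose right-hand side tends to $+\infty$, so it exceeds $C$ for $n$ large. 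Thus $\liminf_n\mathbb{P}(\Tduase>C)\ge 1-\varepsilon$, and $\varepsilon$ is arbitrary.

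The main obstacle is the first step: establishing the Frobenius-norm DUASE perturbation bound with a single aligning matrix $\Qn$ whose conditioning is uniformly controlled, under fixed latent positions satisfying Assumptions~\ref{assump:Pmat-singularvals}--\ref{assum:Q_sing_values} rather than the random-sample setting of \citet{baum2024doubly}; once this is available, the remainder is elementary. A secondary subtlety is that $\Qn$ need not be orthogonal — DUASE distributes one invertible transformation between the left and right embeddings — so one must argue through $\sigma_{\min}(\Qn)$ rather than treating $\Qn$ as an isometry, and Assumption~\ref{assum:Q_sing_values} is precisely what keeps $\sigma_{\min}(\Qn)$ bounded away from zero.
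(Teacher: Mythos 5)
Your proposal is correct and follows essentially the same route as the paper: align the estimated and true embeddings via the invertible $\Qn$ with bounded conditioning (Proposition~12 of \citet{baum2024doubly}), control the block-wise Frobenius estimation errors using the fixed-position extension of the DUASE consistency result, and use the triangle inequality (you apply it in reverse to lower-bound the statistic; the paper upper-bounds $\norm{\Xmat_n^k-\bar\Xmat_n}_F$ and rearranges — the same computation). One small correction: the Frobenius-norm error available from Corollary~\ref{corr:DUASE-bound} is $\Op(\sqrt{\log n})$, not $\Op(1)$ and not $o_{\mathbb{P}}(\sqrt{\log n})$ as your parenthetical suggests; this still suffices because the signal is assumed to be $\omega(\sqrt{\log n})$, and indeed this $\Op(\sqrt{\log n})$ noise floor is precisely why the theorem's detectability threshold sits at $\sqrt{\log n}$.
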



The proofs for Theorems \ref{prop:TestLevel} and \ref{prop:TestPower} are provided in the appendix in Sections \ref{sec:proof-level} and \ref{sec:proof-power}. Together, these results ensure the existence of a fixed rejection region such that, as $n$ tends to infinity, it is extremely unlikely that $\psi_n$ lies within this region under the null hypothesis and extremely likely to lie within this region when the difference between layer-specific embeddings is sufficiently large. We remark that the set of alternatives over which the test statistic is able to differentiate from the null is not entirely complementary to the null hypothesis. For alternatives in which the divergence of the layer-specific latent positions from the mean, $K^{-1} \sum_{k=1}^K \norm{ \Xmat_n^k - \bar \Xmat_n}_F$, scales as or slower than $\sqrt{\log n}$, it is not guaranteed that the test will reject even for large graphs. This case roughly corresponds to the setting in which fewer than $\log n$ nodes exhibit differentiated behavior across layers. The inability to detect differences affecting only a small number of nodes is expected as the test statistic $\psi_n$ captures both the true differences in layer-specific latent positions as well as the noise corresponding to each of the latent position estimates. 
When only a tiny fraction of the nodes exhibit differentiated behavior across layers, the relatively small signal of true latent position differences is drowned out by the noise from an increasing number of latent position estimates. Violations of the null hypothesis which are highly localized are naturally more difficult to detect. Alternative approaches based on different metrics, such as the two-to-infinity norm, could potentially be better suited to this problem than Frobenius norm aggregation, but would likely be less powerful in the setting where many nodes are differentiated and could require much larger graphs to be effective.  



Theorems~\ref{prop:TestLevel}~and~\ref{prop:TestPower} arise from consistency results in Theorem~1 in \cite{baum2024doubly}, adapted to the case of \textit{fixed} latent positions, resulting in the key statement in Theorem~\ref{thm:fixedbound} below. 
To place a bound on the random latent position estimates, we adopt the concept of overwhelming probability \citep{tao2010random} and 
use $\Op(\cdot)$ to denote an asymptotic rate that bounds a sequence of random variables with overwhelming probability. 
For a sequence of real-valued random variables $Z_n$ and real-valued function $f$, we write $\vert Z_n\vert = \Op\{f(n)\}$ if, for any $\gamma>0$, there exist $n_\gamma\in\mathbb{N}$ and $C_\gamma>0$ such that $\mathbb{P}\{\vert Z_n\vert \leq C_\gamma f(n)\}\geq 1 - n^{-\gamma}$ for all $n\geq n_\gamma$.

\begin{theorem}[Fixed position two-to-infinity norm bound]
\label{thm:fixedbound}

Let $(\Xmat_n, \Ymat_n)$, $\Xmat_n\in\mathbb{R}^{nK\times d},\ \Ymat_n\in\mathbb{R}^{nT\times d}$ be a sequence of latent positions satisfying the conditions in Definition~\ref{def:fixed-dmprdpg} and Assumptions~\ref{assump:Pmat-singularvals}~and~\ref{assum:Q_sing_values}. 
Let $\Amat_n \sim \mathrm{DMPRDPG} (\Xmat_n, \Ymat_n)$ and define $(\hat \Xmat_n, \hat \Ymat_n) = \mathrm{DUASE}(\Amat_n)$. Then there exists a sequence of matrices $\Qn \in \mathrm{GL}(d)$ such that for $k \in [K]$
\begin{equation}
\frac{1}{K}  \sum_{k=1}^K \left\| \hat \Xmat_n^k \Qn - \Xmat_n^k\right\|_{\tti} = \Op\left(\sqrt{\frac{\log n}{n}}\right),
\end{equation}
where $\mathrm{GL}(d)$ is the general linear group of degree $d$ and $\norm{\cdot}_{\tti}$ is the two-to-infinity matrix norm.
\end{theorem}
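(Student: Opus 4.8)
The plan is to reduce the statement to the analogous two-to-infinity bound proved for random (exchangeable) latent positions in \cite{baum2024doubly}, and to verify that every ingredient of that proof only used consequences of the sampling model that are already guaranteed here by Assumptions~\ref{assump:Pmat-singularvals}~and~\ref{assum:Q_sing_values}. Concretely, I would first fix notation for the singular value decompositions of $\Pmat_n$ and $\Amat_n$: write $\Pmat_n = \UmatP\DmatP\VmatP^\intercal$ and $\Amat_n = \UmatA\DmatA\VmatA^\intercal + (\text{tail})$, so that $\hXm_n = \UmatA\DmatA^{1/2}$. The natural candidate aligning matrix is $\Qn = \DmatA^{-1/2}\Wmat_n \DmatP^{1/2}$ (or a closely related expression built from $\Wmat_n$, the solution of the orthogonal Procrustes problem between $\UmatA$ and $\UmatP$, together with the analogous quantity on the right), chosen so that $\hXm_n \Qn$ is close to the true latent positions $\Xmat_n = \UmatP\DmatP^{1/2}\Wmat_P$ for the appropriate rotation $\Wmat_P$ relating $\UmatP\DmatP^{1/2}$ to $\Xmat_n$. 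Because the bound is a row-wise ($2\to\infty$) statement, and because $\Xmat_n^k$ is exactly the block of rows of $\Xmat_n$ indexed by layer $k$, the averaged quantity $K^{-1}\sum_k \|\hXm_n^k \Qn - \Xmat_n^k\|_{\tti}$ is controlled by $\|\hXm_n \Qn - \Xmat_n\|_{\tti}$ (each layer block has $2\to\infty$ norm at most that of the full matrix), so it suffices to bound the latter.

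The core of the argument is a matrix perturbation / leave-one-out analysis. I would decompose $\hXm_n\Qn - \Xmat_n$ into the standard terms arising from $\Amat_n - \Pmat_n$: a linear "first-order" term of the form $(\Amat_n - \Pmat_n)\VmatP\DmatP^{-1/2}$-type contributions, plus higher-order residual terms involving $\UmatA\UmatA^\intercal - \UmatP\UmatP^\intercal$ and $\DmatA - \DmatP$. For each piece I would invoke: (i) the singular-value gap and magnitude control $\sigma_\ell(\Pmat_n) = \Theta(n)$ from Assumption~\ref{assump:Pmat-singularvals}, which gives $\|\DmatP^{-1}\| = O(1/n)$ and, via a Davis--Kahan / Weyl argument together with a spectral norm bound $\|\Amat_n - \Pmat_n\| = \Op(\sqrt{n})$ on the doubly unfolded Bernoulli noise (a Matrix-Bernstein or $\varepsilon$-net bound, holding with overwhelming probability), the eigenspace perturbation $\|\UmatA\Wmat_n - \UmatP\| = \Op(1/\sqrt n)$; (ii) a row-wise concentration bound on $(\Amat_n - \Pmat_n)\VmatP$ obtained by Bernstein's inequality applied to each row, yielding $\|(\Amat_n - \Pmat_n)\VmatP\|_{\tti} = \Op(\sqrt{\log n})$ (the $\sqrt{\log n}$ coming from a union bound over the $nK$ rows, which is the source of the logarithm in the final rate); and (iii) the incoherence / bounded-row-norm properties of $\UmatP$ and $\VmatP$, which follow from Assumption~\ref{assum:Q_sing_values}: $n^{-1}\Xmat_n^\intercal\Xmat_n \to \Delta_X$ bounds row norms of $\Xmat_n$ and hence (with the singular-value control) the row norms of $\UmatP$ by $O(1/\sqrt n)$, and similarly for $\VmatP$. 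Combining these, the dominant term scales like $\|(\Amat_n-\Pmat_n)\VmatP\|_{\tti}\cdot\|\DmatP^{-1/2}\| = \Op(\sqrt{\log n})\cdot O(1/\sqrt n) = \Op(\sqrt{\log n / n})$, and I would check the remaining residual terms are of the same order or smaller.

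The main obstacle is the leave-one-out step needed to handle the dependence between $\UmatA$ (equivalently the residual $\UmatA\Wmat_n - \UmatP$) and each individual row of $\Amat_n - \Pmat_n$: a naive bound on the cross term $(\UmatA\UmatA^\intercal - \UmatP\UmatP^\intercal)(\Amat_n - \Pmat_n)\VmatP$ via operator norms loses the row-wise sharpness and gives only $\Op(1/\sqrt n)$ rather than a genuine $2\to\infty$ control. The standard fix — and what I would carry out here — is to introduce, for each row index $r$, an auxiliary matrix $\Amat_n^{(r)}$ identical to $\Amat_n$ except with the $r$-th row (and column, in the appropriate unfolded sense) replaced by its expectation, and to show the associated singular subspace $\UmatA^{(r)}$ is both close to $\UmatA$ and independent of row $r$, so that Bernstein applies conditionally. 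This is essentially the \cite{abbe2020entrywise}/\cite{cape2019two} machinery already deployed in \cite{baum2024doubly}; the only thing to verify carefully is that it goes through verbatim when the latent positions are deterministic and merely satisfy Assumptions~\ref{assump:Pmat-singularvals}~and~\ref{assum:Q_sing_values}, rather than being i.i.d.\ draws — which is exactly the content of the remark in the excerpt that these assumptions are automatically satisfied in the sampled-position regime. I would therefore structure the proof as a sequence of lemmas mirroring those in \cite{baum2024doubly} (spectral norm bound, eigenvector perturbation, leave-one-out decomposition, row-wise Bernstein), each restated under the fixed-position assumptions, and then assemble them. Finally, I would note that the existence of a single $\Qn$ working simultaneously for all $k$ is immediate since $\Qn$ is defined globally from the SVD of the full unfolded $\Amat_n$, not layer by layer.
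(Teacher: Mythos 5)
Your proposal is correct and follows essentially the same route as the paper: both reduce the statement to the random-position proof of Theorem~1 in \cite{baum2024doubly}, observe that the only step relying on exchangeability of the rows is the two-to-infinity bound on the residual terms involving the empirical singular subspaces, and repair it with a leave-one-out construction (replacing one row of $\Amat_n$ by its expectation to decouple that row of the noise from the perturbed subspace) combined with Davis--Kahan/Wedin-type subspace perturbation bounds, row-wise concentration, and a union bound over rows, exactly as the paper does. The only cosmetic differences are that the paper applies the decoupling to the right singular vectors $\mathbf{V}_{\mathbf{A}}$ of the rectangular unfolded matrix (so only a row, not a row and column, is replaced) and uses Hoeffding rather than Bernstein for the row-wise concentration.
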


The proof of this result is reported in Section~\ref{sec:proof_th3}. 
Theorem \ref{thm:fixedbound} is critical for the analysis of the asymptotic behavior of the test statistic. Because the $\norm{\cdot}_{\tti}$ norm corresponds to the maximum Euclidean row norm \citep{cape19two}, Theorem~\ref{thm:fixedbound} provides a uniform bound on the maximum estimation error of the latent position estimate for any individual node. This result allows us to bound the growth rate of the noise component of $\psi_n$ as the number of nodes in the graph tends to infinity.


\subsection{Estimating the test statistic distribution via bootstrapping} \label{sec:Bootstrap}

Theorems \ref{prop:TestLevel} and \ref{prop:TestPower} demonstrate that the test statistic $\Tduase$ produces a powerful and well-controlled test when paired with an appropriate critical value $C$. Unfortunately, despite the availability of a central limit theorem for the DUASE estimators \citep[\textit{cf.}][]{baum2024doubly}, the asymptotic distribution of the test statistic $\psi_n$ under the null hypothesis is not available in closed form in practice. Therefore, in order to estimate an appropriate critical value, we propose a bootstrap procedure  which approximates the distribution of $\psi_n$ under the null hypothesis. Using this bootstrapped distribution, we are able to calculate a $p$-value for the observed test statistic. The proposed procedure is described in Algorithm~\ref{alg:DUASE_boot}.

\begin{algorithm}[h]
\caption{Bootstrap procedure for DUASE.}
\label{alg:DUASE_boot}
 \textbf{Input} Left DUASE $\hat \Xmat=[\hat \Xmat^1\mid\cdots \mid\hat \Xmat^K]\in\mathbb{R}^{nK\times d}$, right DUASE $\hat{\Ymat}=[\hat \Ymat^1\mid\cdots \mid\hat \Ymat^T]\in\mathbb{R}^{nT\times d}$, number of bootstrap samples $n_{boot}$.\\
 \textbf{Output} Test $p$-value. \\
      \textbf{Compute} $\bar{\hat{\Xmat}} = K^{-1} \sum_{k=1}^K
      \hat{\Xmat}^k$. \\
      \textbf{Compute} $\psi_{obs} = (K \sqrt{\log n})^{-1}\sum_{k=1}^K \|\hat \Xmat^k - \bar{\hat{\Xmat}}\|_F$. \\
     \For{$b \in  [n_{boot}]$}{ 
         \For{$k \in [K]$}{ 
             \For{$t \in [T]$}{
                \textbf{Sample} $\Amat^{k,t}_{boot} \sim \mathrm{Bernoulli}(  \bar{\hat{\Xmat}} {\hat{\Ymat}^{t \intercal}}) $.
             }
        } 
        \textbf{Compute} $(\hat \Xmat_{boot}, \hat \Ymat_{boot}) = \mathrm{DUASE} \left(\Amat_{boot} \right)$. \\ 
        \textbf{Compute} $\bar{\hat{\Xmat}}_{boot} = K^{-1} \sum_{k=1}^K \hat{\Xmat}_{boot}^k$. \\
        \textbf{Compute} $\psi^\ast_b = (K\sqrt{\log n})^{-1}\sum_{k=1}^K \|\hat \Xmat_{boot}^k - \bar{\hat{\Xmat}}_{boot}\|_F$.}
     \textbf{Compute} $p_{test} = (1 + n_{boot})^{-1}  [1+ \sum_{b=1}^{n_{boot}}  \mathbbm{1} \{ \psi^\ast_b > \psi_{obs}\} ]$. \\        
     \textbf{Return} Test $p$-value $p_{test}$.
\end{algorithm}

The use of bootstrap methods to estimate the distribution of graph attributes is an active research area in the random graphs literature \citep[see, for example,][]{green2022bootstrapping,Zu24,levin2025bootstrapping,dilworth2024valid}. \cite{levin2025bootstrapping} show that bootstrap methods that use spectral embedding to estimate the connection probability matrix and then simulate new graphs from this matrix are appropriate in an asymptotic sense. Finite sample behavior of such bootstraps, however, is not always well behaved. \cite{dilworth2024valid} introduce a testing procedure to determine if a bootstrap is valid and find that a nearest-neighbor based bootstrap algorithm can outperform the naive plug-in based bootstrap for graphs of finite size. For simplicity and computational tractability, in Algorithm \ref{alg:DUASE_boot} we adopt a procedure based on the standard plug-in method. Our procedure is in the spirit of the bootstrap procedure presented in \cite{tang2017semiparametric}, where we substitute a different embedding procedure and add an 
averaging step across the layer-specific embeddings. 

The logic behind the 
bootstrap procedure in Algorithm~\ref{alg:DUASE_boot} is that when the latent positions for all layers are equal, the average embedding 
will be a good estimate for each of these latent positions. When this is not the case, averaging over the layers will produce an adjacency matrix with less variation between the layers than the original graph.

\section{Simulations}
\label{sec:Sims}

To complement the asymptotic results in Section \ref{sec:Methods}, we explore an application of the testing procedure to the setting where $n$ is finite via simulation. In particular, we demonstrate that the bootstrap procedure in Algorithm \ref{alg:DUASE_boot} produces a faithful approximation of the theoretical test statistic distribution resulting in critical values that respect the defined level of the test and produce meaningful power. As an illustrative example, we consider the case of the stochastic blockmodel \citep[SBM;][]{holland1983stochastic} and adapt it to the dynamic multiplex case as the dynamic multiplex stochastic blockmodel \citep[DMPSBM;][]{baum2024doubly}, a special case of the DMPRDPG. Under this framework, each node is assigned to one layer-specific and one time-specific community, and the probability of a link between nodes $i$ and $j$ depends only on their community memberships.

\begin{definition}[Dynamic multiplex stochastic blockmodel, DMPSBM; \citet{baum2024doubly}]
    Assume that, for a dynamic multiplex network with $\K$ layers and $\T$ time points, nodes in a graph are assigned to groups or communities, where integers $z^k_i\in[G_1],\ \upsilon^t_i\in[G_2],\ G_1,G_2\in\mathbb N,\ i\in[n]$, denote the group membership assigned to the $i$-th node for the $t$-th time point and $k$-th layer respectively.
    Furthermore, define matrices $\mat{B}^{k,t} \in [0,1]^{G_1 \times G_2}$ representing between-group connection probabilities for the $k$-th layer and $t$-th time point, and set $\mathcal{B} = \{\mat{B}^{k,t}\}_{k \in [K], t \in [T]}$, $\mathcal Z=\{z_i^k\}_{i\in[n],k\in[\K]}$ and $\mathcal U=\{\upsilon_i^t\}_{i\in[n],t\in[\T]}$. For a set of adjacency matrices $\{\Amat^{k,t}\}_{k \in [K], t \in [T]}$, we say that $\Amat \sim \mathrm{DMPSBM}(\mathcal{B}, \mathcal{Z}, \mathcal{U})$, where $\Amat$ is the doubly unfolded adjacency matrix \eqref{eq:double_unfolding}, if
    \begin{equation}
        \Amat^{k,t}_{i,j} \sim \mathrm{Bernoulli}\left( \mat{B}^{k,t}_{z_i^k, \upsilon_j^t} \right)
    \end{equation}
    for each $i,j \in [n]$, $k \in [K]$ and $t \in [T]$.
\end{definition}

We simulate data from a directed DMPSBM with $G_1=2$, $G_2=2$, $K=10$ and $T=3$, where the group memberships are static such that $z^k_i = z^{k^\prime}_i$ and $\upsilon_i^t = \upsilon_i^{t^\prime}$ for all $i\in[n]$, $k,k^\prime \in [K]$ and $t, t^\prime \in [T]$. The connection probabilities between and within each community are defined as functions of $k \in [K]$ and $t \in [T]$ and are prescribed by the following set of matrices $\{\mathbf{B}^{k,t}\}$: 
\begin{equation}    
        \mathbf{B}^{k,t} =
    \begin{bmatrix}
    0.25 + \varepsilon k & 0.1 + 0.1\sin(2\pi t/T) \\ 0.1 + 0.1\sin(2\pi t/T)  & 0.25
    \end{bmatrix}.
    \label{eq:Bmat}
\end{equation}
Although the network behavior varies across both layer and time, our aim is to isolate and test for differences between layer only. In our simulation, the magnitude of the differences between layers of the graph is encoded by a parameter $\varepsilon$. The case where $\varepsilon =0$ corresponds to the null hypothesis and, as the parameter increases, we deviate further from the null setting. While varying the number of nodes in each graph, as well as the value of $\varepsilon$, we run simulations to estimate the power of the testing procedure when $\varepsilon > 0$ and demonstrate that the number of false positives is well controlled when $\varepsilon=0$. To achieve this, we generate one graph from the DMPSBM and embed it according to the DUASE procedure with embedding dimension $d=2$ corresponding to the true dimension of the unfolded matrix of connection probabilities when $\varepsilon=0$. We then input these embedding estimates into Algorithm \ref{alg:DUASE_boot} with the number of bootstrap samples $n_{boot} = 1000$ to test for differences between layers using $\alpha=0.05$. For each value of $n\in\{50,100,200,300\}$ and $\varepsilon\in\{0, 0.005, 0.01, 0.02\}$, this process is repeated $1000$ times resulting in $1000$ Monte Carlo replicates. For each value of $n$ and $\varepsilon$ the fraction of rejected tests is reported in Table \ref{tab:DUASE_power}. 
 
\begin{table}[t]
    \centering
    \sisetup{
        table-number-alignment = center,
        table-figures-integer = 1,
        table-figures-decimal = 3
    }
    \begin{tabular}{lllll}
        \toprule
        & {$\varepsilon = 0.0$} & {$\varepsilon = 0.005$} & {$\varepsilon = 0.01$} & {$\varepsilon = 0.02$} \\
        \midrule
        $n=50$  & 0.031 & 0.117 & 0.580 & 1.0 \\
        $n=100$ & 0.053 & 0.425 & 0.999 & 1.0 \\
        $n=200$ & 0.046 & 0.997 & 1.0   & 1.0 \\
        $n=300$ & 0.056 & 1.0   & 1.0   & 1.0 \\
        \bottomrule
    \end{tabular}
    \vspace*{0.5em}
    \caption{Fraction of rejected tests for different graph sizes $n$ and layer-specific differences $\varepsilon$ for the simulation in Section~\ref{sec:Sims}.}
    \label{tab:DUASE_power}
\end{table}


\begin{figure}[t]
    \centering
    \subfloat[$n=50$]{%
        \includegraphics[width=0.47\linewidth]{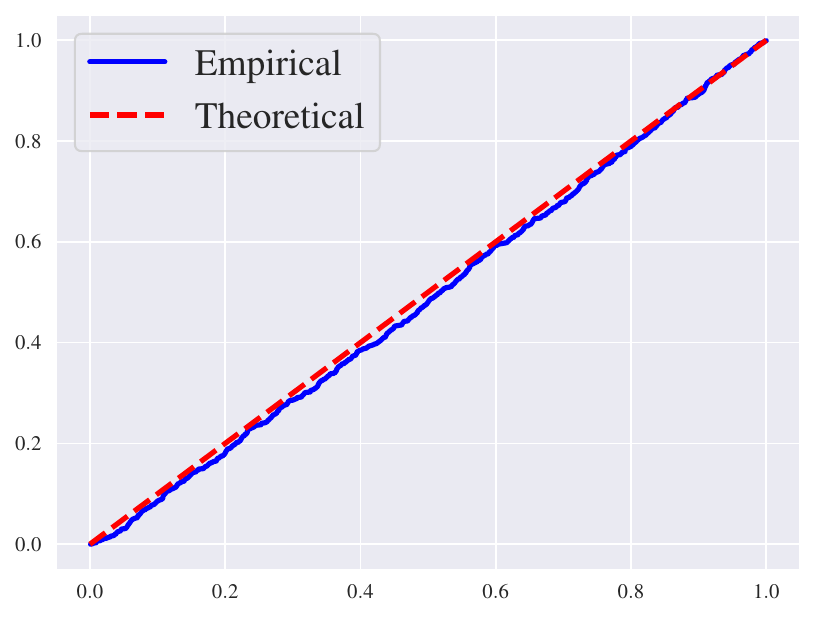}%
    }
    \hfil
    \subfloat[$n=100$]{%
        \includegraphics[width=0.47\linewidth]{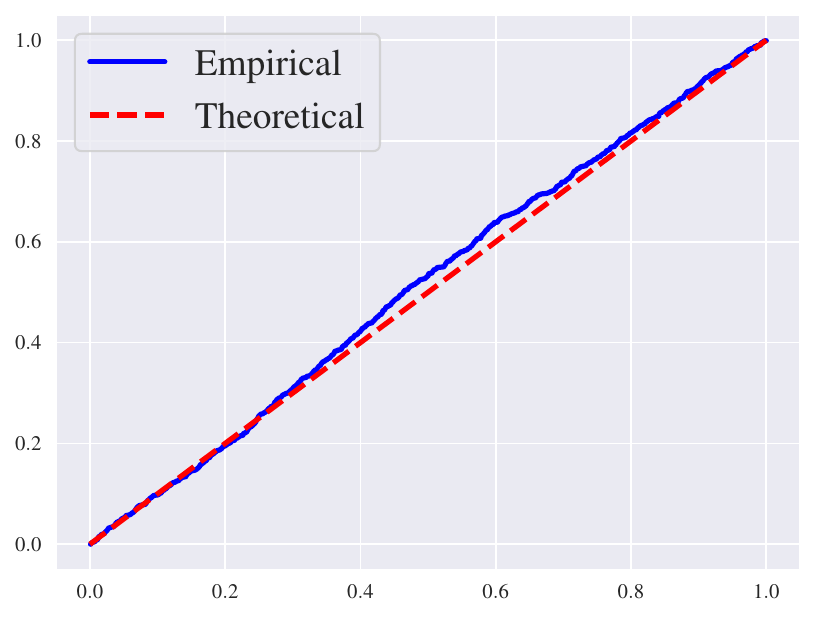}%
    }
    \vfil
    \subfloat[$n=200$]{%
        \includegraphics[width=0.47\linewidth]{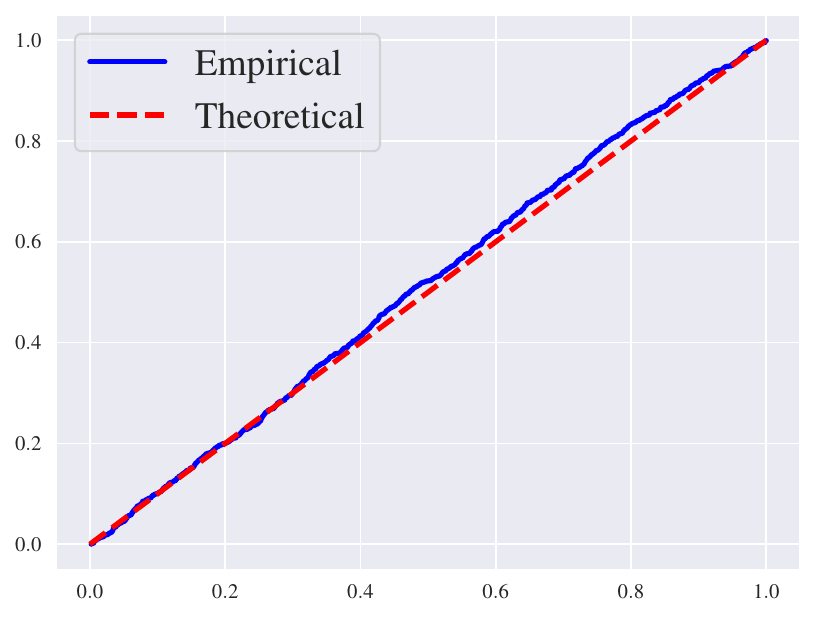}%
    }
    \hfil
    \subfloat[$n=300$]{%
        \includegraphics[width=0.47\linewidth]{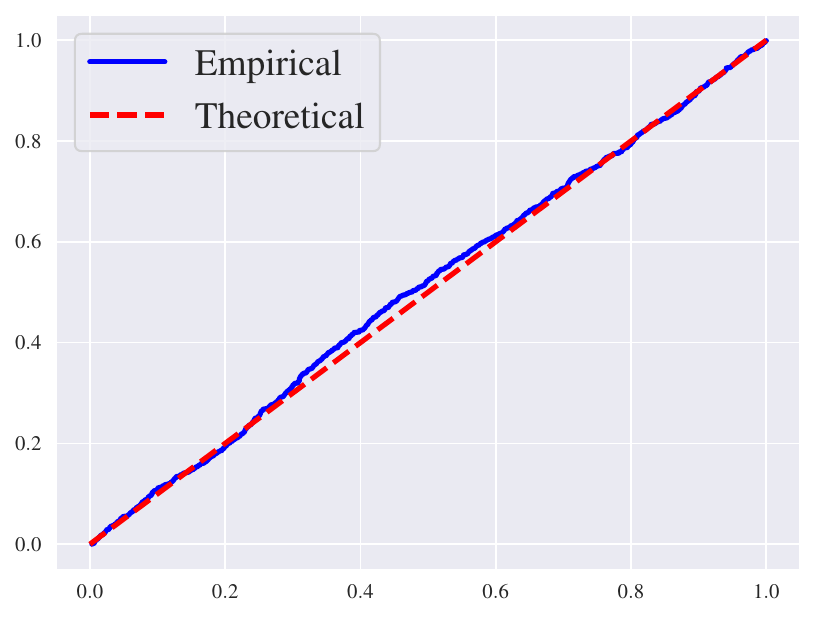}%
    }
    \caption{Cumulative distribution function of the $p$-values for graphs with $n$ nodes (for $n$ from $50$ to $300$) and $\varepsilon=0$ for the simulation in Section~\ref{sec:Sims}.}
    \label{fig:CDFs}
\end{figure}

As prescribed by Theorem \ref{prop:TestPower}, the power of the testing procedure increases as both the number of nodes in the graph and the magnitude of layer-wise differences increase. We see that under the null hypothesis, the DUASE procedure produces Type-I errors at approximately the rate of 5\% as we would expect when running the bootstrap procedure with $\alpha =0.05$. To further investigate the validity of the testing procedure under the null, we examine the distribution of $p$-values when $\varepsilon=0$ in Figure \ref{fig:CDFs}. We find that, even for graphs as small as $n=50$, the empirical distribution is close to the expected uniform distribution on $[0,1]$, suggesting that the samples generated by Algorithm \ref{alg:DUASE_boot} accurately approximate the distribution of the test statistic $\psi_n$ under the null setting. For models which are substantially more complex than the two-community DMPSBM presented here, larger graphs may be required to achieve a comparable level of calibration.

\section{Application to biological learning networks} 
\label{sec:clayton}

To demonstrate the effectiveness of the proposed methodology in an applied setting, we analyze a dataset originally presented in \cite{zheng2024dynamic}, which captures the dynamic neural activity of \textit{Drosophila} larvae over time. Using methods from \cite{eschbach2020recurrent} and \cite{jiang2021models}, it is possible to use the neural connections mapped out in \cite{winding2023connectome} to develop biologically-informed simulations of \textit{Drosophila} brain activity under different experimental conditions. 

In this analysis, we study the learning process of the \textit{Drosophila} and use our hypothesis testing method to identify which neural connections play an important role in this process. Simulations were produced to describe the neural activity of a \textit{Drosophila} in an experiment where the insect is first presented with a random odor paired with a reward stimulus. At two later time points, the insect is presented with the same odor but now without the accompanying stimulus. If the insect has learned to associate the odor with the stimulus from the initial exposure, then we expect to see a neural response similar to a reward stimulus in the latter two time points despite the fact that only the odor was presented. 
To generate different experimental conditions, individual neural connections can be selectively removed from the model that is used to produce the simulated data, and this modified connectome model can then be used to simulate the brain activity during this learning process. If the removed neural connection is irrelevant for the learning process, then we expect the data from the resulting simulation to be unchanged. Only when the removed connection plays a critical role in the learning process do we expect to see a meaningful change in the simulated data. The brain activity data produced by these simulations describes the level of activation of individual neural connections and in this way is naturally represented by a graph. The experimental process of removing a single connection from the data simulation model was completed for $K=13$ different neural connections. For each of these, the simulation was repeated 11 times resulting in a total of $11 \times 13 = 143$ 
graphs observed at $T=160$ points in time. 

Using this collection of dynamic multiplex graphs, we use the hypothesis testing method detailed in Section \ref{sec:Methods} to determine if any neural connections are important for the learning process and, if so, which ones. As a first step, we perform a statistical test to confirm that the 11 replicates for each experimental condition are similar. In order to do this, we split the dataset into 13 sets of dynamic graphs and represent each of the 11 replicates as a distinct layer. We can then test for differences between each of these replicates by running the testing procedure, where the embedding dimension is selected using the scree-plot criterion of \cite{Zhu06}, on each of these sub-datasets. As expected, each of these tests produces a large $p$-value and for each of the 13 experimental conditions we fail to reject the null hypothesis that any of the 11 replicates are significantly different from the others. In each of these tests, the observed test statistic is smaller than the values that are generated using the bootstrap algorithm suggesting that the variation we observe between replicates is somewhat smaller than we would expect if each of these layers were truly independent samples from a DMPRDPG. This might suggest that the modeling assumptions of the DMPRDPG are only approximately satisfied for 
this application. 


 \begin{algorithm}[t]
 \caption{Modified bootstrap procedure for DUASE on averaged adjacency matrices.}
 \label{alg:Application_boot}
 \textbf{Input} Left DUASE $\hat \Xmat=[\hat \Xmat^1\mid\cdots \mid\hat \Xmat^K]\in\mathbb{R}^{nK\times d}$, right DUASE $\hat{\Ymat}=[\hat \Ymat^1\mid\cdots \mid\hat \Ymat^T]\in\mathbb{R}^{nT\times d}$, number of bootstrap samples $n_{boot}$, number of averaged adjacency matrices $n_{rep}$.\\
 \textbf{Output} Test $p$-value. \\
      \textbf{Compute} $\bar{\hat{\Xmat}} = K^{-1} \sum_{k=1}^K \hat{\Xmat}^k$. \\
      \textbf{Compute} $\psi_{obs} = (K \sqrt{\log n})^{-1} \sum_{k=1}^K \|\hat \Xmat^k - \bar{\hat{\Xmat}}\|_F$. \\
     \For{$b \in [n_{boot}]$}{ 
         \For{$k \in [K]$}{ 
             \For{$t \in [T]$}{
                 \For{$r \in [n_{rep}]$}{
                     \textbf{Sample} $\Amat^{k,t}_{boot, r} \sim \mathrm{Bernoulli}( \bar{\hat{\Xmat}} {\hat{\Ymat}^{t \intercal}})$.}
                 \textbf{Compute} $\bar \Amat^{k,t}_{boot} = n_{rep}^{-1} \sum_{r=1}^{n_{rep}} \Amat^{k,t}_{boot, r}$.}
             } 
             \textbf{Compute} $(\hat \Xmat_{boot}, \hat \Ymat_{boot}) = \mathrm{DUASE}(\bar \Amat_{boot})$. \\ 
             \textbf{Compute} $\bar{\hat{\Xmat}}_{boot} = K^{-1} \sum_{k=1}^K \hat{\Xmat}_{boot}^k$. \\
             \textbf{Compute} $\psi^\ast_b = (K \sqrt{\log n})^{-1} \sum_{k=1}^K \|\hat \Xmat_{boot}^k - \bar{\hat{\Xmat}}_{boot}\|_F$.}
     \textbf{Compute} $p_{test} = (1 + n_{boot})^{-1} [1+ \sum_{b=1}^{n_{boot}} \mathbbm{1} \{ \psi^\ast_b > \psi_{obs}\} ]$. \\        
     \textbf{Return} Test $p$-value $p_{test}$
\end{algorithm}

\begin{figure*}[t]
    \centering
    \subfloat[Global test for equality of latent positions]{%
        \includegraphics[width=0.33\textwidth]{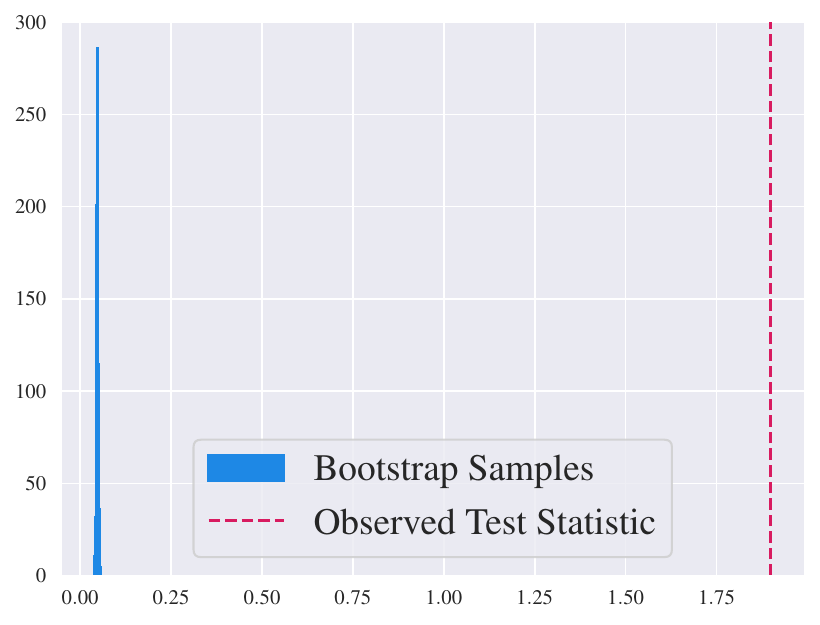}%
        \label{fig:global}
    }
    \hfil
    \subfloat[Heatmap of $p$-values for pairwise tests]{%
        \includegraphics[width=0.305\textwidth]{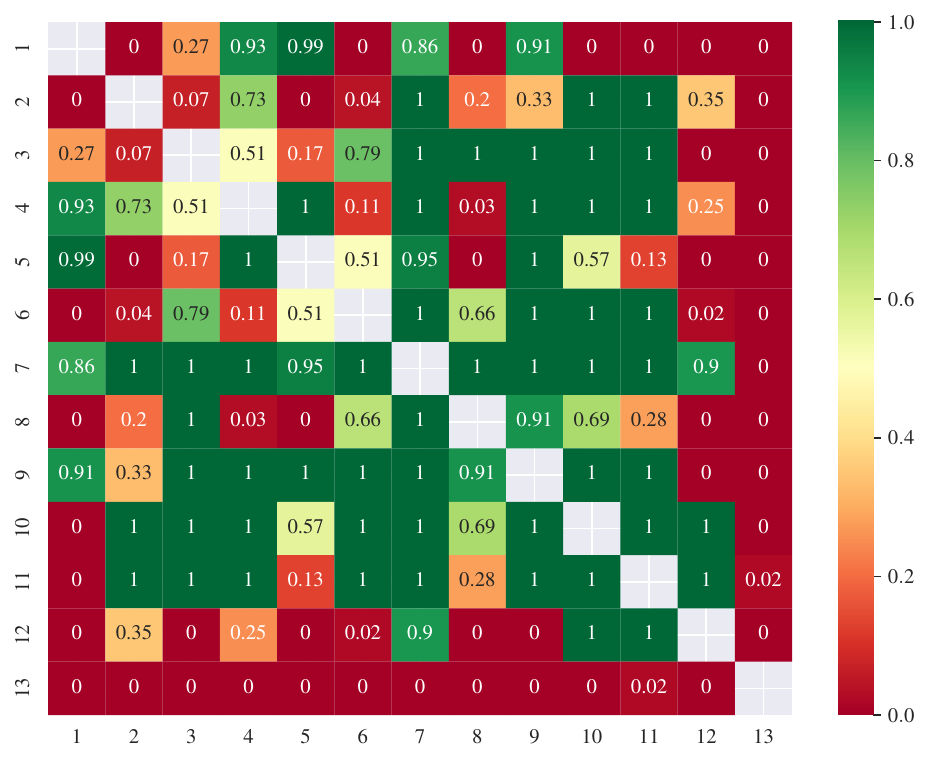}%
        \label{fig:pairwise}
    }
    \hfil
    \subfloat[Number of rejections per layer $(\alpha=0.01)$]{%
        \includegraphics[width=0.33\textwidth]{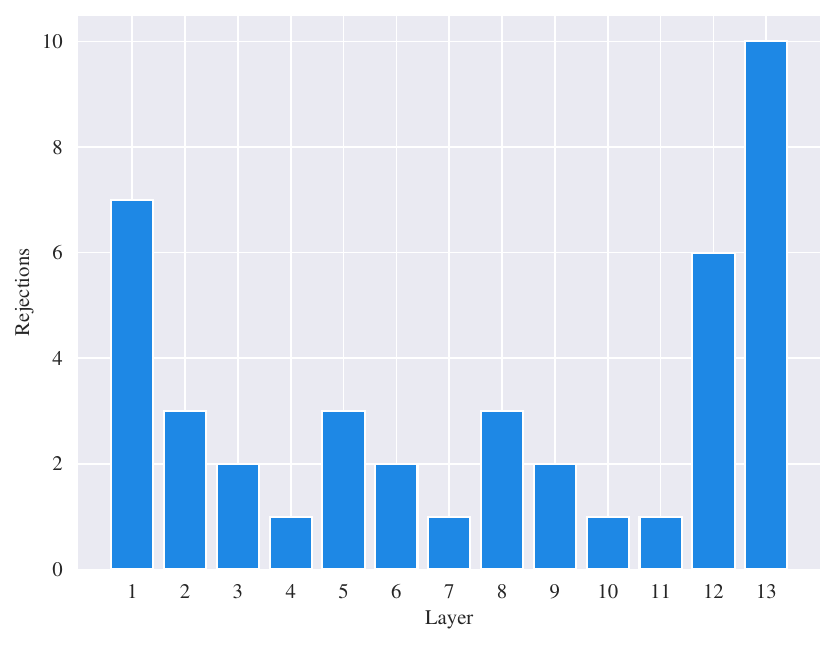}%
        \label{fig:subfigB}
    }
    \caption{Tests for differences between layers in the \textit{Drosophila} connectome data: 
    (a) global test for equality of latent positions, 
    (b) heatmap of $p$-values for all pairwise tests, and 
    (c) number of rejections per layer across pairwise tests with significance level $\alpha=0.01$.}
    \label{fig:Clayton_Pairwise}
\end{figure*}

Since the null hypothesis of no difference within replicates is not rejected, we then average the adjacency matrices corresponding to the replicates for each layer to create a dataset with $K=13$ layers and $T=160$ time points describing the averaged brain activity for each of the experimental interventions. Using the modified bootstrap procedure presented in Algorithm~\ref{alg:Application_boot}, which accounts for the averaging of the adjacency matrices, we perform a global test for the null hypothesis that the latent positions for all of the $K=13$ layers are identical. This is easily rejected with the observed test statistic and bootstrap values reported in Figure \ref{fig:global}, suggesting that there are significant differences between the time series of graphs obtained by removing different neural connections. 



In order to gain further understanding of which neural connections behave most differently, we conduct each of the 78 possible pairwise tests and report the heatmap of $p$-values in Figure~\ref{fig:pairwise}, and the number of rejections at the level $\alpha=0.01$ for each layer in Figure~\ref{fig:subfigB}. Layer 13 appears to be the most differentiated and corresponds to the brain connection from neuron \textit{DAN-f1} to \textit{FBN-1}. Critically, this is the same connection that was identified as exhibiting distinct behavior in \cite{zheng2024dynamic} using clustering methods and which has a biological justification for behaving differently in this particular experiment: it is the only case in which a connection to a \textit{feedback neuron} is removed \citep{zheng2024dynamic}. 


\section{Conclusion}
\label{sec:discussion}

The ability to test for differences between layers of a dynamic multiplex graph is a research question that is of key importance and arises in fields as diverse as cybersecurity and biology. In this work, we have shown that it is possible to use spectral methods to achieve this goal. We construct a test statistic that is based on a spectral embedding of the adjacency matrices of a dynamic multiplex graph and provide two formal results to prove that this test statistic is effective in the asymptotic regime where the number of nodes tends to infinity. Concretely, we prove the existence of a rejection region such that our proposed test has a Type-I error rate that converges to $0$ under the null and power that converges to $1$ under the set of alternatives where the divergence between layer-specific latent positions scales as $\omega (\sqrt{\log n})$. In order to identify this rejection region in practice, we propose a bootstrap algorithm and demonstrate that, in simulations where a dynamic multiplex graph is generated by a two-community block model, the algorithm produces rejection regions such that the resulting test is both powerful and respects the prescribed level. Additionally, we apply this testing procedure to a biological dataset where we find that it can be effectively used to correctly detect the neural connections in the \textit{Drosophila} connectome that are most relevant for the learning process. 

A possible limitation of the present work is that the underlying DUASE does not explicitly separate layer-specific, time-specific, and layer--time interaction effects. Although the low-rank DUASE framework can implicitly capture certain interaction structures through a sufficiently large latent dimension, identifying and estimating such interactions remains beyond the scope of the current methodology.
An interesting direction for future work is also the extension of kernel and distance-based two-sample testing methods to the dynamic multiplex setting, such as the maximum mean discrepancy framework of \cite{gretton2012kernel} or energy distance-based approaches. While the test statistic proposed in this work is based on deviations of layer-specific DUASE estimates from their empirical mean rather than on kernel or distance-based machinery, it is natural to ask whether applying MMD or energy distance directly to spectral embeddings of dynamic multiplex graphs could yield complementary or improved power guarantees. Another potential avenue for future work could involve extending existing spectral-based methods \citep[such as][]{chen2024spectral} for testing for differences between layers within the dynamic multiplex case.

\section*{Code}
Code to implement the methods proposed in this work, and reproduce the simulated experiments, is available in the GitHub repository \href{https://github.com/mjbaum/dmprdpg/tree/main/testing} {\texttt{mjbaum/dmprdpg/testing}}.

\section*{Acknowledgements}

The authors thank Carey Priebe and Youngser Park (Johns Hopkins University) for providing access to the data used in Section~\ref{sec:clayton}. 
Francesco Sanna Passino acknowledges funding from the EPSRC, grant number EP/Y002113/1.

\appendix
\section{Proofs}
\label{sec:Appendix}
\subsection{Intermediate results}






\begin{corollary}
    \label{corr:DUASE-bound}
Let $(\Xmat_n, \Ymat_n)$, $\Xmat_n\in\mathbb{R}^{nK\times d},\ \Ymat_n\in\mathbb{R}^{nT\times d}$ be a sequence of latent positions satisfying the conditions in Definition~\ref{def:fixed-dmprdpg} as well as Assumptions~\ref{assump:Pmat-singularvals}~and~\ref{assum:Q_sing_values}. 
Let $\Amat_n \sim \mathrm{DMPRDPG} (\Xmat_n, \Ymat_n)$ and define $(\hat \Xmat_n, \hat \Ymat_n) = \mathrm{DUASE}(\Amat_n)$. Then there exists a sequence $\Qn \in \mathrm{GL}(d)$ such that:
        $$\frac{1}{K}  \sum_{k=1}^K \norm{\hat \Xmat_n^k \Qn - \Xmat_n^k}_F = \Op \left( \sqrt{\log n} \right).$$
    \end{corollary}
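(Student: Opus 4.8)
The plan is to obtain the Frobenius-norm bound directly from the two-to-infinity norm bound of Theorem~\ref{thm:fixedbound}, using the elementary inequality relating the two norms. Recall that for any matrix $\mathbf{M}\in\mathbb{R}^{m\times d}$ one has $\norm{\mathbf{M}}_F \leq \sqrt{m}\,\norm{\mathbf{M}}_{\tti}$: since $\norm{\mathbf{M}}_{\tti}$ equals the largest Euclidean norm among the $m$ rows of $\mathbf{M}$, summing the squared row norms gives $\norm{\mathbf{M}}_F^2 \leq m\,\norm{\mathbf{M}}_{\tti}^2$.

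First I would take the sequence $\Qn\in\mathrm{GL}(d)$ provided by Theorem~\ref{thm:fixedbound} and apply the inequality above to each residual matrix $\hat\Xmat_n^k\Qn - \Xmat_n^k\in\mathbb{R}^{n\times d}$, obtaining $\norm{\hat\Xmat_n^k\Qn - \Xmat_n^k}_F \leq \sqrt{n}\,\norm{\hat\Xmat_n^k\Qn - \Xmat_n^k}_{\tti}$ for each $k\in[K]$. Averaging over $k$ and factoring out the deterministic $\sqrt{n}$ yields
\[
\frac{1}{K}\sum_{k=1}^K \norm{\hat\Xmat_n^k\Qn - \Xmat_n^k}_F \;\leq\; \sqrt{n}\cdot\frac{1}{K}\sum_{k=1}^K \norm{\hat\Xmat_n^k\Qn - \Xmat_n^k}_{\tti}.
\]

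It then remains to invoke Theorem~\ref{thm:fixedbound}, which asserts that the averaged two-to-infinity error on the right is $\Op(\sqrt{\log n/n})$, and to note that multiplying an $\Op$-rate by a deterministic sequence simply multiplies the rate while preserving the overwhelming-probability guarantee. Explicitly, for any $\gamma>0$ Theorem~\ref{thm:fixedbound} supplies constants $C_\gamma>0$ and $n_\gamma\in\mathbb{N}$ with $\mathbb{P}\{K^{-1}\sum_k\norm{\hat\Xmat_n^k\Qn-\Xmat_n^k}_{\tti}\leq C_\gamma\sqrt{\log n/n}\}\geq 1-n^{-\gamma}$ for $n\geq n_\gamma$, and on that event the displayed inequality gives $K^{-1}\sum_k\norm{\hat\Xmat_n^k\Qn-\Xmat_n^k}_F\leq C_\gamma\sqrt{\log n}$; this is precisely the claimed bound, with the same $\Qn$. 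There is no genuine obstacle beyond this bookkeeping — the corollary is a direct consequence of Theorem~\ref{thm:fixedbound}. I would only remark that the $\sqrt{n}$ factor lost in passing from $\norm{\cdot}_{\tti}$ to $\norm{\cdot}_F$ is exactly what converts the uniform per-node estimation error into the aggregate Frobenius-scale error that controls the noise component of $\psi_n$ in the analysis of Theorems~\ref{prop:TestLevel}~and~\ref{prop:TestPower}.
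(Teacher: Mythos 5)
Your proof is correct and takes essentially the same route as the paper's own argument: both pass from the Frobenius norm to the two-to-infinity norm via $\norm{\mathbf{M}}_F \leq \sqrt{n}\,\norm{\mathbf{M}}_{\tti}$ for each layer-wise residual and then invoke Theorem~\ref{thm:fixedbound} to absorb the $\sqrt{n}$ factor into the $\Op(\sqrt{\log n / n})$ rate. Your explicit bookkeeping of the overwhelming-probability constants is a welcome addition, but the substance is identical.
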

    \begin{proof}
        This result follows directly from the interpretation of $\norm{{\mathbf{M}}}_{2 \to \infty}$ as the maximum Frobenius norm of any row of $\mathbf{M}$. In particular:
        \begin{align}
            \frac{1}{K}  \sum_{k=1}^K &\norm{\hat \Xmat_n^k \Qn - \Xmat_n^k}_F = \frac{1}{K}  \sum_{k=1}^K \sqrt{\norm{\hat \Xmat_n^k \Qn - \Xmat_n^k}^2_F} \\  &\leq \frac{1}{K}  \sum_{k=1}^K \sqrt{ n \norm{\hat \Xmat_n^k \Qn - \Xmat_n^k}^2_{2 \to \infty}} \\ 
            &= \frac{1}{K}  \sum_{k=1}^K \sqrt{ n} \norm{\hat \Xmat_n^k \Qn - \Xmat^k}_{2 \to \infty}.
         \end{align}
         By applying Theorem \ref{thm:fixedbound} we have $\|\hat \Xmat_n^k \Qn - \Xmat_n^k\|_{2 \to \infty} = \Op(n^{-1/2}\log^{1/2}{n})$, which gives the result. 
    \end{proof}

\begin{corollary}
    \label{corr:ThirdTerm}
    Let $(\Xmat_n, \Ymat_n)$, $\Xmat_n\in\mathbb{R}^{nK\times d},\ \Ymat_n\in\mathbb{R}^{nT\times d}$ be a sequence of latent positions satisfying the conditions in Definition~\ref{def:fixed-dmprdpg} as well as Assumptions~\ref{assump:Pmat-singularvals}~and~\ref{assum:Q_sing_values}. 
    Let $\Amat_n \sim \mathrm{DMPRDPG} (\Xmat_n, \Ymat_n)$ and define $(\hat \Xmat_n, \hat \Ymat_n) = \mathrm{DUASE}(\Amat_n)$ and $\bar{\hat \Xmat}_n = \frac{1}{K} \sum_{k=1}^K \hat \Xmat_n^k$. Then there exists a sequence $\Qn \in \mathrm{GL}(d)$ such that:
    \begin{equation}
        \norm{\bar \Xmat_n - \bar{\hat \Xmat}_n \Qn }_F = \Op \left( \sqrt{\log{n}} \right).
    \end{equation}
\end{corollary}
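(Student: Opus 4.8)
The plan is to reduce the statement directly to Corollary~\ref{corr:DUASE-bound} via the triangle inequality. Since $\Qn$ distributes over the average, $\bar \Xmat_n - \bar{\hat \Xmat}_n \Qn = K^{-1}\sum_{k=1}^K (\Xmat_n^k - \hat \Xmat_n^k \Qn)$, and subadditivity of the Frobenius norm gives $\norm{\bar \Xmat_n - \bar{\hat \Xmat}_n \Qn}_F \le K^{-1}\sum_{k=1}^K \norm{\Xmat_n^k - \hat \Xmat_n^k \Qn}_F$. The right-hand side is precisely the quantity controlled in Corollary~\ref{corr:DUASE-bound}, which yields the bound $\Op(\sqrt{\log n})$, completing the argument.

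The only point that requires attention is that a single sequence $\Qn \in \mathrm{GL}(d)$ must work simultaneously for every layer $k \in [K]$, since otherwise the average could not be pulled through. This is guaranteed because the orthogonal-transformation-like matrix $\Qn$ produced by Theorem~\ref{thm:fixedbound}, and hence inherited by Corollary~\ref{corr:DUASE-bound}, arises from the \emph{joint} DUASE of the unfolded matrix $\Amat_n$ and does not depend on $k$; one simply fixes that common $\Qn$ before applying the triangle inequality.

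I expect no substantive obstacle here: the result is a one-line consequence of Corollary~\ref{corr:DUASE-bound} and the triangle inequality, and it is recorded separately only because it is the quantity $\norm{\bar \Xmat_n - \bar{\hat \Xmat}_n \Qn}_F$ (the error of the \emph{averaged} embedding, rather than the averaged per-layer error) that is needed directly in the proofs of Theorems~\ref{prop:TestLevel} and~\ref{prop:TestPower}.
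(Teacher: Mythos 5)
Your proposal is correct and follows exactly the paper's own argument: distribute $\Qn$ through the average, apply the triangle inequality for the Frobenius norm, and invoke Corollary~\ref{corr:DUASE-bound}. Your remark that the single sequence $\Qn$ from the joint DUASE works uniformly over $k$ is also consistent with how the paper uses it.
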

\begin{proof}
The result follows from the following inequality:
    \begin{align}
    \norm{\bar \Xmat_n - \bar{\hat \Xmat}_n \Qn }_F &= \norm{\frac{1}{K} \sum_{k=1}^K \left( \Xmat_n^k - \hat \Xmat_n^k \Qn \right)}_F \\ 
    &\leq \frac{1}{K}  \sum_{k=1}^K \norm{\Xmat_n^k - \hat \Xmat_n^k \Qn}_F .
    \end{align}
    Applying Corollary \ref{corr:DUASE-bound} yields the desired result.
\end{proof}

\subsection{Proof of Theorem~\ref{thm:fixedbound}}
\label{sec:proof_th3}

\begin{proof}
The proof for this result follows from the proof for the case of random latent positions presented in Theorem~1 of \cite{baum2024doubly}, but where we replace Proposition~3 of \cite{baum2024doubly} with Assumption~\ref{assump:Pmat-singularvals}. 
\end{proof}

\subsection{Proof of Theorem \ref{prop:TestLevel}}
\label{sec:proof-level}
\begin{proof}
Let $\Qn$ denote the invertible transformation that most closely maps $\hat \Xmat_n$ to $\Xmat_n$ in the two-to-infinity norm, written $\Qn =\argmin_{\Q \in \mathrm{GL}(d)} \|\Xmat_n - \hat \Xmat_n \Q\|_{\tti}$. We begin by decomposing and bounding the numerator of the proposed test statistic $\Tduase$, rescaled by $\sigma_d(\Qn)\equiv\sigma_{min}(\Qn)$, the smallest singular value of $\Qn$:
\begin{align}
\label{eq:Decomposition}
& \frac{\sigma_{min}(\Qn)}{K}\sum_{k=1}^K \norm{\hat \Xmat_n^k - \bar{\hat \Xmat}_n}_F \leq \frac{1}{K} \sum_{k=1}^K \norm{\hat \Xmat_n^k \Qn - \bar{\hat \Xmat}_n \Qn}_F \notag \\ 
& \leq \frac{1}{K}  \sum_{k=1}^K \norm{\hat \Xmat_n^k \Qn - \Xmat_n^k}_F + \frac{1}{K} \sum_{k=1}^K \norm{ \Xmat_n^k - \bar \Xmat_n}_F \\ &\hspace*{3em} + \norm{\bar \Xmat_n - \bar{\hat \Xmat}_n \Qn }_F.
\end{align}    
Under the null hypothesis $H^n_0$ in Equation~\eqref{eq:DUASE_null}, the second term of the right-hand side is equal to zero by definition and therefore can be excluded from our analysis. We apply Corollary~\ref{corr:DUASE-bound} to bound the first term and can bound the third term by Corollary~\ref{corr:ThirdTerm}. The quantity $\sigma_{min}(\Qn)$ is of constant order by Proposition 12 of \cite{baum2024doubly} and therefore 
\begin{equation}
    \frac{1}{K} \sum_{k=1}^K \norm{\hat \Xmat_n^k - \bar{\hat \Xmat}_n}_F = \Op \left( \sqrt{\log{n}} \right).
\end{equation}
It follows that $\Tduase = \Op \left( 1 \right)$ and therefore by the definition of overwhelming probability for $n$ large and any $\alpha>0$, there exists a constant $C$ such that $\mathbb{P}\left( \Tduase > C \right) \leq n^{-\alpha}$, which gives the result.
\end{proof}

\subsection{Proof of Theorem~\ref{prop:TestPower}}
\label{sec:proof-power}

\begin{proof}
We begin with the following decomposition: 
\begin{align}  
& \norm{ \Xmat_n^k - \bar \Xmat_n}_F  \\ 
&= \norm{ \hat \Xmat_n^k \Qn - \bar{\hat \Xmat}_n \Qn + \bar{\hat \Xmat}_n \Qn - \bar \Xmat_n + \Xmat_n^k - \hat{\Xmat}^k_n \Qn }_F \\ 
&\leq \norm{ \hat \Xmat_n^k - \bar{\hat \Xmat}_n}_F \norm{\Qn}_F + \norm{ \bar{\hat \Xmat}_n \Qn - \bar \Xmat_n }_F \\ &\hspace*{3em} + \norm{ \Xmat_n^k - \hat{\Xmat}_n^k \Qn}_F .
\end{align}
We apply Corollary~\ref{corr:DUASE-bound} and Corollary~\ref{corr:ThirdTerm} to show that the latter two terms are $\Op\left( \sqrt{\log{n}} \right)$ and we make use of the fact that $\norm{\Qn}_F$ is of constant order by Proposition 12 of \cite{baum2024doubly}. By assumption, we have $K^{-1}\sum_{k=1}^K \norm{ \Xmat_n^k - \bar \Xmat_n}_F = \omega(\sqrt{\log n})$. This gives: 
\begin{equation}
        \frac{1}{K} \sum_{k=1}^K \norm{ \hat \Xmat_n^k - \bar{\hat \Xmat}_n}_F = \omega_{\mathbb{P}} \left(\sqrt{\log n} \right). 
    \end{equation}
It follows that $\Tduase = \omega_{\mathbb{P}} \left( 1 \right)$, and therefore for any constant $C > 0$ we have that $\mathbb{P}\left(\Tduase  > C \right)  \to 1$ as $n \to \infty$.
\end{proof}

\section{Additional simulation results}
\subsection{Effects of model misspecification}
\label{sec:Robustness}

The testing procedure defined in Section~\ref{sec:Methods} and corresponding theoretical guarantees defined in Theorems~\ref{prop:TestLevel} and~\ref{prop:TestPower} are derived under the setting where the embedding dimension $d$ is known and the edges are assumed to be independent. 
Therefore, under the simulation setting presented in Section~\ref{sec:Sims}, we perform additional simulations with $n=50$ to illustrate the ways in which the testing procedure can be impacted in the scenario where these assumptions are not satisfied. 

First, we begin by investigating the impact of a misspecified embedding dimension $d$ on the power and level of the test. Results are reported in Figure~\ref{fig:subfig-dim}. Under the misspecified regime,  
selecting a dimension that is too small reduces the power of the test, 
since the curve corresponding to $d=1$ is overly conservative. On the other hand, the remaining curves 
suggest that when the dimension is chosen to be too large, additional noise is introduced into the calculation of the test statistic, resulting in 
an inflated Type-I error 
with the degree of inflation increasing with the degree of misspecification. 

Next, to assess the impact of non-independent entries on the performance of the test, we perform 
simulations in which the entries of the adjacency matrix are defined by 
correlated Bernoulli random variables.
To introduce the correlation between edges, we 
construct a factor model with $m=10$ latent factors, where each possible edge 
$\ell\in[n]\times[n]$
is assigned a randomly generated vector of factor weights $\vec{f}_\ell\in \mathbb{R}^m$, normalized such that $\norm{\vec{f}_\ell}=1$. The formation of an edge is then determined by a random variable $Z_\ell = \sqrt{a}\,\vec{f}_\ell^\intercal\boldsymbol{\eta} + \sqrt{1-a}\, \varepsilon_i$ where $\boldsymbol{\eta} \sim \mathcal{N}_m(0, \mat I_m)$ are latent factors, shared between all edges, $a \in [0,1]$ controls the proportion of variance that is shared between edges, and $\varepsilon_i \sim \mathcal{N}(0,1)$ is the independent variance component. Under this framework, each $Z_\ell$ is marginally a standard normal random variable, whereas jointly the collection of all $n(n-1)$ random variables $Z_\ell$ has a multivariate normal distribution with covariance 
$\mat{\Sigma} = a \mat{FF}^\intercal + (1-a)\mat{I}_{n(n-1)}$, where $\mat{F} \in \mathbb{R}^{n(n-1) \times m}$ is obtained by stacking the factor loadings of all edges into a matrix. 
To model the formation of the $\ell$-th edge, which occurs with probability $p_\ell$ obtained via the DMPRDPG, we then threshold the draw from the random variable $Z_\ell$ at the level of $\Phi^{-1}(p_\ell)$, where $\Phi(\cdot)$ denotes the CDF of a standard normal random variable. 
Apart from the correlation structure induced by this model, all other aspects of the simulation remain unchanged relative to 
Section~\ref{sec:Sims}. 
The results are plotted in Figure~\ref{fig:subfig-corr}, showing that increasing dependence between edges results in inflated Type-I error rates with the impact becoming prominent for larger values of $a$, which measure the strength of correlation within edges. 




\begin{figure}[t]
    \centering

    \subfloat[Misspecified dimension]{%
        \includegraphics[width=0.47\linewidth]{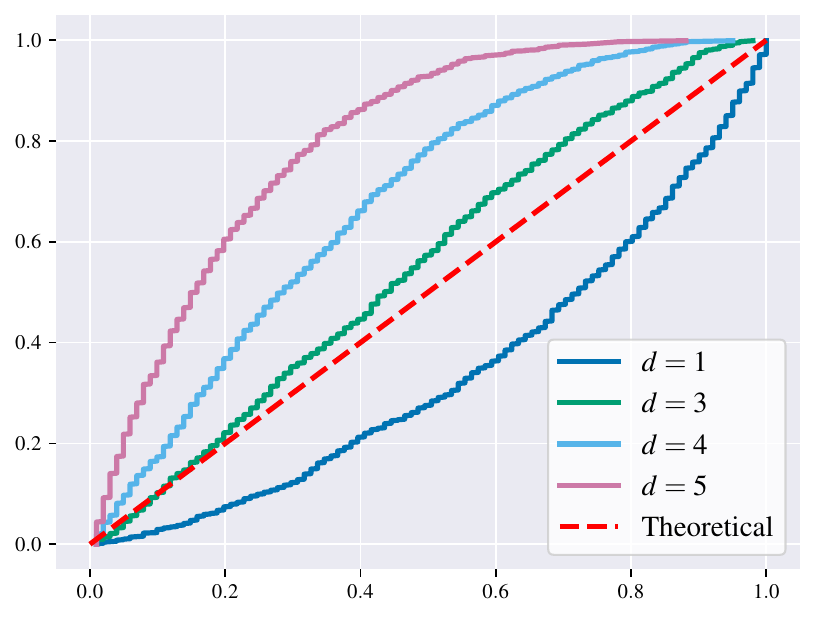}%
        \label{fig:subfig-dim}
    }
    \hfil
    \subfloat[Correlated entries]{%
        \includegraphics[width=0.47\linewidth]{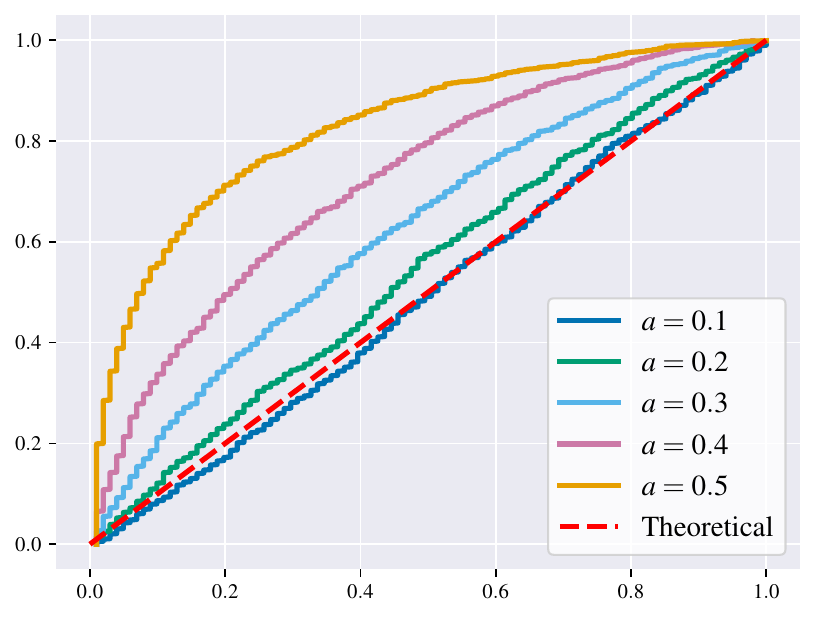}%
        \label{fig:subfig-corr}
    }

    \caption{Distribution of $p$-values under misspecified embedding dimension and correlated entries for the simulation presented in Section~\ref{sec:Sims}.}
    \label{fig:misspeccified-sims}
\end{figure}
\subsection{Comparison with existing methods}
\label{sec:Comparisons}
\subsubsection{Comparison with multiple pairwise tests}
\label{sec:mult-testing}
Although the problem of testing for differences between layers in dynamic multiplex graphs is not a topic that has been explicitly studied in the literature, it is possible to 
construct alternative tests through collections of pairwise tests based on existing methodologies for the two-graphs case. In order to assess the power of such a procedure, we conduct a simulation similar to that in Section~\ref{sec:Sims} by generating data from an undirected DMPSBM with $K=10$ layers and $T=3$ time points where the connection probabilities are given by
\begin{equation}    
        \mathbf{B}^{k,t} =
    \begin{bmatrix}
    0.25 + \varepsilon k & 0.6 + 0.1\sin(2\pi t/T) \\ 0.6 + 0.1\sin(2\pi t/T)  & 0.25
    \end{bmatrix}.
    \label{eq:Bmat-pair}
\end{equation}
For each $t \in [T]$ we utilize the testing procedure of \cite{tang2017semiparametric}, selecting $d$ using the method of \cite{Zhu06}, to test each of the possible pairwise hypotheses $\Pmat^{k,t} = \Pmat^{k^\prime, t}$ for $k, k^\prime \in [K]$. The $p$-values of each of the pairwise tests are aggregated via the Cauchy Combination Test \citep[CCT;][]{liu2020cauchy} to obtain a global $p$-value for the null hypothesis of equality of latent positions in all layers. 

Additionally, we compare the performance of the DUASE-based test against a procedure based on a spectral projection constructed as follows: given order-$d$ truncated SVDs $\Amat^{k,t}\approx \hat{ \mathbf{U}}^{k,t} \hat{\mathbf{D}}^{k,t} \hat{\mathbf{V}}^{k,t\intercal}$ for each $k\in[K],\, t\in[T]$, we write $\boldsymbol{\Pi}^k = T^{-1} \sum_{t=1}^{T} \hat{\mathbf{U}}^{k,t} \hat{\mathbf{D}}^{k,t} \hat{\mathbf{U}}^{k,t\intercal}$ be the average scaled projection matrices across time points for a fixed layer. We then consider the test statistic $\sum_{k=1}^{K} \|\boldsymbol{\Pi}^k - \bar{\boldsymbol{\Pi}}\|_{F}$ where $\bar{\boldsymbol{\Pi}} = K^{-1} \sum_{k=1}^{K} \boldsymbol{\Pi}^k$, with distribution estimated via bootstrapping. We call this procedure spectral projection test (SPT).  

\begin{table}[t]
    \centering
    Method: DUASE (ours) \\[0.25em]
    \sisetup{
        table-number-alignment = center,
        table-figures-integer = 1,
        table-figures-decimal = 3
    }
    \begin{tabular}{lllll}
        \toprule
        & {$\varepsilon = 0.0$} & {$\varepsilon = 0.01$} & {$\varepsilon = 0.02$} & {$\varepsilon = 0.04$} \\
        \midrule
        $n=50$  & 0.072 & 0.450 & 0.982 & 1.0 \\
        $n=100$ & 0.050 & 0.962 & 1.0 & 1.0 \\
        \bottomrule\vspace*{0.1em}
    \end{tabular}
    
    \centering
    Method: Multiple testing based on \cite{tang2017semiparametric} \\[0.25em]
    \sisetup{
        table-number-alignment = center,
        table-figures-integer = 1,
        table-figures-decimal = 3
    }
    \centering
    \begin{tabular}{lllll}
        \toprule
        & {$\varepsilon = 0.0$} & {$\varepsilon = 0.04$} &
        {$\varepsilon = 0.06$} &{$\varepsilon = 0.08$} \\
        \midrule
        $n=50$  & 0.0 & 0.013 & 0.475 & 0.890 \\
        $n=100$ & 0.0 & 0.455 & 0.986 & 1.0 \\
        \bottomrule\vspace*{0.1em}
    \end{tabular}

    \centering
    Method: Spectral projection test (SPT) \\[0.25em]
    \sisetup{
        table-number-alignment = center,
        table-figures-integer = 1,
        table-figures-decimal = 3
    }
    \begin{tabular}{lllll}
        \toprule
        & {$\varepsilon = 0.0$} & {$\varepsilon = 0.01$} & {$\varepsilon = 0.02$} & {$\varepsilon = 0.04$} \\
        \midrule
        $n=50$  & 0.061 & 0.122 & 0.308 & 0.962 \\
        $n=100$ & 0.079 & 0.292 & 0.950 & 1.0 \\
        \bottomrule\vspace*{0.1em}
    \end{tabular}
    \caption{Fraction of rejected tests for different graph sizes $n$ and layer-specific differences $\varepsilon$ for the simulation in  Section~\ref{sec:mult-testing}.}
    \label{tab:Tang_power_comp}
\end{table}

The fraction of rejected tests for the two alternative procedures is compared with that of our testing procedure in Table~\ref{tab:Tang_power_comp}. We observe that the procedure of \cite{tang2017semiparametric} can be adapted to detect differences between layers in dynamic multiplex graphs, however it is naturally not as powerful as our testing procedure proposed in Section~\ref{sec:Methods}, 
specifically designed for this problem. 
Similarly, the SPT remained less powerful than the proposed DUASE procedure. 
The conservative nature of the pairwise approach may arise from a number of sources, including the complexity of aggregating a large number of individual $p$-values as well as the ability of DUASE to borrow strength across multiple layers and time points.

\subsubsection{Comparisons with multiple replicates}
\label{sec:Replicates_comp}

Another testing method that could be adapted to test for differences between layers in dynamic multiplex graphs is the spectral-based approach of \cite{chen2024spectral}. This procedure allows for a global test of equality for connection probability matrices in the setting where multiple replicates from each of the connection probability matrices are observed. It is therefore most directly comparable with the modified bootstrap procedure detailed in Algorithm~\ref{alg:Application_boot}. In order to compare the effectiveness of these methods, we conduct a simulation in which $n_{rep}=50$ copies of an undirected DMPSBM are observed with underlying connection probabilities given by 
\begin{equation}    
        \mathbf{B}^{k,t} =
    \begin{bmatrix}
    0.25 + \varepsilon k & 0.1 + 0.1\sin(\pi t/T) \\ 0.1 + 0.1\sin(\pi t/T)  & 0.25
    \end{bmatrix}.
    \label{eq:Bmat-mult}
\end{equation}
At each time point $t \in [T]$, the procedure of \cite{chen2024spectral} is used to test the hypothesis that $\Pmat^{1,t} = \Pmat^{2,t} =\dots = \Pmat^{K,t}$. The $p$-values of each of these $T$ tests were then combined via the CCT. In this setting, we find that the procedure of \cite{chen2024spectral} is effective, although it is naturally less powerful given that it is designed for a more general setting which relies on an independent edge assumption but not a low-rank assumption on the connection probability matrices. The fraction of rejected tests for both methods are reported in Table~\ref{tab:Chen_power_reps}. In comparison to the results presented in Section~\ref{sec:Sims}, we also see that the procedure where multiple replicates are averaged is much more powerful as the amount of random noise is diminished. This suggests that the averaging procedure used in Section~\ref{sec:clayton} makes effective use of the additional information provided by multiple replicates. 

\begin{table}[t]
    \centering
    Method: DUASE (ours) \\[0.25em]
    \sisetup{
        table-number-alignment = center,
        table-figures-integer = 1,
        table-figures-decimal = 3
    }
    \begin{tabular}{lllll}
        \toprule
        & {$\varepsilon = 0.0$} & {$\varepsilon =0.001$} & {$\varepsilon =0.002$} & {$\varepsilon =0.004$} \\
        \midrule
        $n=50$  & 0.036 & 0.183 & 0.831 & 1.0 \\
        $n=100$ & 0.046 & 0.674 & 1.0 & 1.0 \\
        \bottomrule\vspace*{0.1em}
    \end{tabular}
    
    \centering
    Method: Multiple testing based on \cite{chen2024spectral} \\[0.25em]
    \sisetup{
        table-number-alignment = center,
        table-figures-integer = 1,
        table-figures-decimal = 3
    }
    \begin{tabular}{llllll}
        \toprule
        & {$\varepsilon = 0.0$} & {$\varepsilon = 0.002$} & {$\varepsilon = 0.004$} &
        {$\varepsilon = 0.008$} \\
        \midrule
        $n=50$  & 0.011 & 0.013 & 0.097 & 0.924 \\
        $n=100$ & 0.098 & 0.221 & 0.904 & 1.0 \\
        \bottomrule\vspace*{0.1em}
    \end{tabular}
    \caption{Fraction of rejected tests for different graph sizes $n$ and layer-specific differences $\varepsilon$ for the simulation in  Section~\ref{sec:Replicates_comp}.}
    \label{tab:Chen_power_reps}
\end{table}

\bibliographystyle{rss}
\singlespacing
\bibliography{reference}





\end{document}